\theoremstyle{plain}
\newtheorem{theorem}{Theorem}[section]
\newtheorem{lemma}[theorem]{Lemma}
\theoremstyle{definition}
\definecolor{blue}{rgb}{0.211,0.211,0.856}
\definecolor{red}{rgb}{0.856,0.2,0.2}
\newcommand{\blue}{\color{blue}}
\newcommand{\red}{\color{red}}
\definecolor{darkgreen}{rgb}{0.1,0.656,0.1}
\newcommand{\darkgreen}{\color{darkgreen}}
\definecolor{green}{rgb}{0.1,0.656,0.1}
\definecolor{lgreen}{rgb}{0.411,1.0,0.411}
\definecolor{lred}{rgb}{1.0,0.711,0.501}
\definecolor{lgrey}{rgb}{0.5,0.5,0.5}
\definecolor{llgrey}{rgb}{0.68,0.68,0.68}
\newenvironment{SE}{\noindent\color{red} SE : }{}
\newenvironment{aw}{\noindent\color{magenta} AW :  }{}
\newcommand\ie{i.\,e.,\xspace}
\newcommand\eg{e.\,g.\xspace}
\newcommand{\QuickMergesort}{{QuickMergesort}\xspace}
\newcommand{\QuickXsort}{{QuickXsort}\xspace}
\newcommand{\bMQMS}{\ensuremath{\mathrm{bMQMS}\xspace}}
\newcommand{\MQMS}{\ensuremath{\mathrm{MQMS}\xspace}}
\newcommand{\uMQMS}[1]{\ensuremath{\mathrm{MQMS}_{#1}\xspace}}
\newcommand{\MS}{\ensuremath{\mathrm{MS}\xspace}}
\newcommand{\MOM}{\ensuremath{\mathrm{MoM}\xspace}}
\newcommand{\N}{\mathbb{N}}
\newcommand{\Z}{\mathbb{Z}}
\newcommand{\R}{\mathbb{R}}
\newcommand{\Oh}{\mathcal{O}}
\newcommand{\floor}[1]{\left\lfloor\mathinner{#1} \right\rfloor}
\newcommand{\ceil}[1]{\left\lceil\mathinner{#1} \right\rceil}
\newcommand{\oneset}[1]{\left\{\, \mathinner{#1} \,\right\}}
\newcommand{\dd}{:$\!$:}
\newcommand{\stdsort}{\texttt{std:$\!$:sort}\xspace}
\newcommand{\stdpartialsort}{\texttt{std:$\!$:partial\_sort}\xspace}
\newcommand{\stdnth}{\texttt{std:$\!$:nth\_element}\xspace}
\newcommand{\stdstablesort}{\texttt{std:$\!$:stable\_sort}\xspace}
\begin{document}
 \allowdisplaybreaks
   \title{Worst-Case Efficient Sorting with QuickMergesort}

 \author{Stefan Edelkamp\thanks{King's College London, UK.} \and  Armin Wei\ss\thanks{Universit{\"a}t Stuttgart, Germany. Supported by the DFG grant DI 435/7-1.}}

\date{}

\maketitle

\definecolor{mygreen}{rgb}{0,0.6,0}
\definecolor{mygray}{rgb}{0.5,0.5,0.5}
\definecolor{mymauve}{rgb}{0.58,0,0.82}

\begin{abstract}
The two most prominent solutions for the sorting problem are Quicksort and Mergesort. While Quicksort is very fast on average, Mergesort additionally gives worst-case guarantees, but needs extra space for a linear number of elements. Worst-case efficient in-place sorting, however, remains a challenge: the standard solution, Heapsort, suffers from a bad cache behavior and is also not overly fast for in-cache instances.
	
In this work we present median-of-medians QuickMergesort (MoMQuickMergesort), a new variant of QuickMergesort, which combines Quicksort with Mergesort allowing the latter to be implemented in place. Our new variant applies the median-of-medians algorithm for selecting pivots in order to circumvent the quadratic worst case. Indeed, we show that it uses at most $n \log n + 1.6n$ comparisons for $n$ large enough. 
	
We experimentally confirm the theoretical estimates and show that the new algorithm outperforms Heapsort by far and is only around 10\% slower than Introsort (\stdsort implementation of stdlibc++), which has a rather poor guarantee for the worst case. We also simulate the worst case, which is only around 10\% slower than the average case. In particular, the new algorithm is a natural candidate to replace Heapsort as a worst-case stopper in Introsort.

\smallskip
\noindent \textbf{keywords:}
in-place sorting, quicksort, mergesort, analysis of algorithms
\end{abstract}

\section{Introduction}

Sorting elements of some totally ordered universe always has been among the most important tasks carried out on computers. Comparison based sorting of $n$ elements requires at least $\log n! \approx n\log n - 1.44n$ comparisons (where $\log$ is base 2). Up to constant factors this bound is achieved by the classical sorting algorithms Heapsort, Mergesort, and Quicksort. While Quicksort usually is considered the fastest one, the $\Oh(n\log n)$-bound applies only for its average case (both for the number of comparisons and running time)~-- in the worst-case it deteriorates to a $\Theta(n^2)$ algorithm. 
The standard approach to prevent such a worst-case is Musser's Introsort \cite{Mus97}: whenever the recursion depth of Quicksort becomes too large, the algorithm switches to Heapsort (we call this the \emph{worst-case stopper}). This works well in practice for most instances. However, on small instances Heapsort is already considerably slower than Quicksort (in our experiments more than 30\% for $n=2^{10}$) and on larger instances it suffers from its poor cache behavior (in our experiments more than eight times slower than Quicksort for sorting $2^{28}$ elements). This is also the reason why in practice it is mainly used as a worst-case stopper in Introsort.

Another approach for preventing Quicksort's worst case is by using the median-of-medians algorithm \cite{BFPRT73} for pivot selection.
 However, choosing the pivot as median of the whole array yields a bad average (and worst-case) running time. On the other hand, when choosing the median of a smaller sample as pivot, the average performance becomes quite good \cite{Kurosawa16}, but the guarantees for the worst case become even worse.

The third algorithm, Mergesort, is almost optimal in terms of comparisons: it uses only $n \log n- 0.91n$ comparisons in the worst-case to sort $n$ elements. Moreover, it performs well in terms of running time. Nevertheless, it is not used as worst-case stopper for Introsort because it needs extra space for a linear number of data elements. In recent years, several in-place (we use the term for at most logarithmic extra space) variants of Mergesort appeared, both stable ones (meaning that the relative order of elements comparing equal is not changed) \cite{HuangL92,KimK08,GeffertKP00} and unstable ones \cite{Chen06,ElmasryKS12,GeffertKP00,KatajainenPT96}. Two of the most efficient implementations of stable variants are Wikisort \cite{wikisort} (based on \cite{KimK08}) and Grailsort \cite{grailsort} (based on \cite{HuangL92}). An example for an unstable in-place Mergesort implementation is in-situ Mergesort \cite{ElmasryKS12}. It uses Quick/Introselect \cite{Hoare61_find} (\stdnth) to find the median of the array. Then it partitions the array according to the median (\ie move all smaller elements to the right and all greater elements to the left). Next, it sorts one half with Mergesort using the other half as temporary space, and, finally, sort the other half recursively. Since the elements in the temporary space get mixed up (they are used as ``dummy'' elements), this algorithm is not stable.
In-situ Mergesort gives an $\Oh(n \log n)$ bound for the worst case. As validated in our experiments all the in-place variants are considerably slower than ordinary Mergesort.

When instead of the median an arbitrary element is chosen as the pivot, we obtain QuickMergesort \cite{EdelkampW14}, which is faster on average~-- with the price that the worst-case can be quadratic. 
QuickMergesort follows the more general concept of QuickXsort\cite{EdelkampW14}: first, choose a pivot element and partition the array according to it. Then, sort one part with X and, finally, the other part recursively with {QuickXsort}.
As for QuickMergesort, the part which is currently not being sorted can be used as temporary space for X.

Other examples for QuickXsort are QuickHeapsort \cite{CantoneC02,DiekertW13Quick} and QuickWeakheapsort \cite{ES02,EdelkampW14} and Ultimate Heapsort \cite{Katajainen98}. 
QuickXsort with median-of-$\sqrt{n}$ pivot selection uses at most $n \log n + cn + o(n)$ comparisons on average to sort $n$ elements given that X also uses at most $n \log n + cn
+o(n)$ comparisons on average~\cite{EdelkampW14}. 
Moreover, recently Wild \cite{Wild18} showed that, if the pivot is selected as median of some constant size sample, then the average number of comparisons of QuickXsort is only some small linear term (depending on the sample size) above the average number of comparisons of $X$ (for the median-of-three case see also \cite{EdelkampW18QMSArxiv}).
However, as long as no linear size samples are used for pivot selection, QuickXsort does not provide good bounds for the worst case. This defect is overcome in Ultimate Heapsort \cite{Katajainen98} by using the median of the whole array as pivot. In Ultimate Heapsort the median-of-medians algorithms \cite{BFPRT73} (which is linear in the worst case) is used for finding the median, leading to an $n \log n + \Oh(n)$ bound for the number of comparisons. Unfortunately, due to the large constant of the median-of-medians algorithm, the $\Oh(n)$-term is quite big. 

\smallskip
\paragraph{Contribution.}
In this work we introduce median-of-medians QuickMergesort (MoMQuickMergesort) as a variant of QuickMergesort using the median-of-medians algorithms for pivot selection. The crucial observation is that it is not necessary to use the median of the whole array as pivot, but only the guarantee that the pivot is not very far off the median. This observation allows to apply the median-of-medians algorithm to smaller samples leading to both a better average- and worst-case performance. Our algorithm is based on a merging procedure introduced by Reinhardt \cite{Reinhardt92}, which requires less temporary space than the usual merging.
A further improvement, which we call \emph{undersampling} (taking less elements for pivot selection into account), allows to reduce the worst-case number of comparisons down to $n \log n + 1.59n + \Oh(n^{0.8})$. Moreover, we heuristically estimate the average case as $n\log n + 0.275 n + o(n)$ comparisons. The good average case comes partially from the fact that we introduce a new way of adaptive pivot selection for the median-of-medians algorithm (compare to \cite{Alexandrescu17}).
Our experiments confirm the theoretical and heuristic estimates and also show that MoMQuickMergesort is competitive to other algorithms (for $n=2^{28}$ more than 7 times faster than Heapsort and around 10\% slower than Introsort (\stdsort~-- throughout this refers to its libstdc++ implementation)). 
Moreover, we apply MoMQuickMergesort (instead of Heapsort) as a worst-case stopper for Introsort (\stdsort). The results are striking: on special permutations, the new variant is up to six times faster than the original version of \stdsort. 

\smallskip
\paragraph{Outline.}
In \prettyref{sec:prelims}, we recall QuickMergesort and the median-of-medians algorithm. In \prettyref{sec:momqms}, we describe median-of-medians QuickMergesort, introduce the improvements and analyze the worst-case and average-case behavior. Finally, in \prettyref{sec:experiments}, we present our experimental results.

\section{Preliminaries}\label{sec:prelims}
Throughout we use standard $\Oh$ and $\Theta$ notation as defined \eg\ in \cite{CLRS09}. The logarithm $\log$ always refers to base 2. For a background on Quicksort and Mergesort we refer to \cite{CLRS09} or \cite{Knu73}. A \emph{pseudomedian} of nine (resp.\ fifteen) elements is computed as follows: group the elements in groups of three elements and compute the median of each group. The \emph{pseudomedian} is the median of these three (resp.\ five) medians. 

Throughout, in our estimates we assume that the median of three (resp.\ five) elements is computed using three (resp.\ seven) comparisons no matter on the outcome of previous comparisons. This allows a branch-free implementation of the comparisons.

In this paper we have to deal with simple recurrences of two types, which both have straightforward solutions:
\begin{lemma}\label{lem:recurrence}
Let $0< \alpha,\beta, \delta$ with $\alpha + \beta < 1$, $\gamma =1-\alpha$, and $A,C,D,N_0 \in \N$ and
	\begin{align*}
	T(n) &\leq T(\ceil{\alpha n} + A)  + T(\ceil{\beta n} + A) + Cn + D\\
	Q(n) &\leq Q(\ceil{\alpha n} + A)  + \gamma n \log(\gamma n)  + Cn + \Oh(n^\delta)
	\end{align*}
	for $n \geq N_0$ and $T(n), Q(n) \leq D$ for $n \leq N_0$ ($N_0$ large enough). Moreover, let $\zeta \in \R$  such that $\alpha^\zeta + \beta^\zeta = 1$ (notice that $\zeta < 1$). Then
	\begin{align*}
	T(n) &\leq   \frac{Cn}{1-\alpha - \beta} + \Oh(n^\zeta) \qquad\qquad \text{and}\\
	Q(n) &\leq n \log n \!+\!  \left(\frac{ \alpha  \log \alpha }{\gamma}\!+\! \log \gamma  \!+\!\frac{C}{\gamma}\right) n\!+ \!  \Oh(n^\delta). 
	\end{align*}	
\end{lemma}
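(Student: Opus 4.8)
The plan is to handle the two recurrences separately, since they have different shapes (two subproblems with a balanced exponent for $T$, a single subproblem for $Q$), but to open with a common simplification: absorbing the ceilings and the additive constants $A$. Since $T$ and $Q$ count comparisons they are nondecreasing, and for a negative shift $\sigma$ with $\abs{\sigma}\ge \frac{1+A}{1-\max(\alpha,\beta)}$ one checks $\ceil{\alpha n}+A+\sigma \le \alpha(n+\sigma)$ (and the same with $\beta$). Hence, whenever I apply an induction hypothesis phrased in the shifted variable $n+\sigma$, the messy argument $\ceil{\alpha n}+A$ may be replaced, via monotonicity, by the exact fraction $\alpha(n+\sigma)$. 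The key point is that this shift perturbs only lower-order terms and does not disturb the leading or linear coefficients that the lemma asserts.

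For $T$ I would prove the bound by strong induction using the ansatz $T(n)\le K(n+\sigma)+B(n+\sigma)^\zeta - E$, where $K=\frac{C}{1-\alpha-\beta}$ is the fixed point of the linear part (solving $K=K\alpha+K\beta+C$), $\zeta$ is the exponent from the statement, $E$ is a constant, and $B$ is taken large enough to cover the base range $n\le N_0$ (where $T(n)\le D$). Substituting into the cleaned recurrence, the linear terms cancel by the choice of $K$; the $n^\zeta$ terms combine with factor $\alpha^\zeta+\beta^\zeta=1$, which is exactly why $\zeta$ is defined by that equation and is the reason the power $n^\zeta$ reproduces itself. Here $\zeta<1$ because $x\mapsto \alpha^x+\beta^x$ is strictly decreasing and already equals $\alpha+\beta<1$ at $x=1$. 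The one leftover is the additive constant $D$ (together with the $C\sigma$ created by the shift), and since $\alpha^\zeta+\beta^\zeta=1$ holds with equality there is no slack in the $n^\zeta$ term to absorb it; this is precisely what the subtracted constant $E$ is for, as the two recursive calls contribute $-2E$ while the target permits only $-E$. The induction therefore closes once $E\ge D+C\abs{\sigma}$, giving $T(n)\le \frac{Cn}{1-\alpha-\beta}+\Oh(n^\zeta)$.

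For $Q$ I would instead unroll the single recursion, which both proves the bound and exhibits where the linear coefficient comes from. With $n_k=\alpha^k(n+\sigma)$ denoting the shifted size at depth $k$, the cost telescopes to $\sum_{k\ge 0}\bigl(\gamma n_k\log(\gamma n_k)+Cn_k+\Oh(n_k^\delta)\bigr)$. Splitting $\log(\gamma n_k)=\log(\gamma n)+k\log\alpha$ reduces everything to the two standard sums $\sum_{k\ge0}\alpha^k=\frac{1}{\gamma}$ and $\sum_{k\ge0}k\alpha^k=\frac{\alpha}{\gamma^2}$: the first produces the $n\log n$ together with the $\log\gamma$ and $\frac{C}{\gamma}$ contributions, while the arithmetic-geometric sum produces exactly the $\frac{\alpha\log\alpha}{\gamma}$ term, yielding $n\log n+\bigl(\frac{\alpha\log\alpha}{\gamma}+\log\gamma+\frac{C}{\gamma}\bigr)n+\Oh(n^\delta)$. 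The $\Oh(n_k^\delta)$ terms sum geometrically to $\Oh(n^\delta)$, and the finitely many bottom levels plus the base value $D$ contribute only $\Oh(1)$. (Equivalently, the same bound follows by induction with ansatz $Q(n)\le n\log n+\kappa n+Bn^\delta$: the linear coefficients balance precisely at $\kappa=\frac{\alpha\log\alpha}{\gamma}+\log\gamma+\frac{C}{\gamma}$, and the positive-coefficient $n^\delta$ slack absorbs the $\Oh(\log n)$ error incurred when approximating $a\log a$ for $a=\ceil{\alpha n}+A$.)

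I expect the real work to be error bookkeeping rather than conceptual difficulty. The two delicate points are: (i) verifying that the shift $n\mapsto n+\sigma$ and the replacement of $\ceil{\alpha n}+A$ by $\alpha(n+\sigma)$ genuinely move only $\Oh(\log^2 n)=\oh(n^\delta)$ mass and never alter the reported linear coefficients; and (ii) for $T$, the equality $\alpha^\zeta+\beta^\zeta=1$, which removes the usual slack in the dominant error term and forces the slightly unusual device of a subtracted constant $E$ in the ansatz (or, equivalently, a separate treatment of the additive constants). Everything else — the geometric and arithmetic-geometric sums for $Q$ and the fixed-point coefficient $K$ for $T$ — is routine.
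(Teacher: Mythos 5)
Your proposal is correct, and the $Q$ half is essentially the paper's argument: after disposing of the ceilings and the additive $A$, both you and the paper unroll the single recursion and evaluate the geometric sum $\sum_k\alpha^k=1/\gamma$ and the arithmetic--geometric sum $\sum_k k\alpha^k=\alpha/\gamma^2$ to read off the coefficients $\frac{\alpha\log\alpha}{\gamma}+\log\gamma+\frac{C}{\gamma}$; your parenthetical induction variant is just a repackaging of the same computation.

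For $T$ you genuinely deviate. The paper uses superposition: it splits the cleaned recurrence into $T_C$ (driving term $Cn$, zero base values), solved exactly by the fixed point $\frac{Cn}{1-\alpha-\beta}$, and $T_D$ (driving term $D$), for which it simply cites the generalized Master theorem of Kao to get $\Oh(n^\zeta)$. You instead run a single strong induction with the ansatz $K(n+\sigma)+B(n+\sigma)^\zeta-E$, where the subtracted constant $E\ge D+C\lvert\sigma\rvert$ absorbs the additive term that the critical exponent $\alpha^\zeta+\beta^\zeta=1$ leaves no room for in the $n^\zeta$ slack. This buys self-containedness (no external recurrence theorem) at the cost of the slightly fiddly $-E$ device; the paper's route is shorter on the page but leans on a citation and on linearity of the recurrence to justify the split. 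Your explicit shift $\sigma$ with $\lvert\sigma\rvert\ge\frac{1+A}{1-\max(\alpha,\beta)}$ is also a more careful version of the paper's informal remark that one may ``replace $T(n)$ by a reasonably smooth function'' and increase $D$; both rest on the same monotonicity observation (formally, one should pass to $n\mapsto\max_{m\le n}T(m)$, which satisfies the same recurrence), so neither treatment is more at risk than the other. The only point worth double-checking in your write-up is that the base-case range must be enlarged so that $n+\sigma$ is bounded away from $0$ before the induction starts, which you implicitly acknowledge by choosing $B$ to dominate the base range.
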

\begin{proof}	
	It is well-known that $T(n)$ has a linear solution. Therefore, (after replacing $T(n)$ by a reasonably smooth function) $T(\ceil{\alpha n} + A)$ and $T(\floor{\alpha n})$ differ by at most some constant. Thus, after increasing $D$, we may assume that $T(n)$ is of the simpler form
	\begin{align}
	T(n) &\leq T(\alpha n)  + T(\beta n) + Cn + D.\label{eq:rec3}
	\end{align}
	We can split \prettyref{eq:rec3} into two recurrences 
	\begin{align*}
	T_C(n) &\leq T(\alpha n)  + T(\beta n)  + Cn \qquad \qquad \text{and} &
	T_D(n) &\leq T(\alpha n)  + T(\beta n)  + D
	\end{align*}
	with $T_C(n) = 0$ and  $T_D(n) \leq D$ for $n \leq N_0$. For $T_C$ we get the solution $T_C(n) \leq \frac{Cn}{1-\alpha-\beta}$.  By the generalized Master theorem \cite{Kao97}, it follows that $T_D \in \Oh(n^\zeta)$ where $\zeta \in \R$ satisfies $\alpha^\zeta + \beta^\zeta = 1$. Thus, 
	\[
	T(n) \leq  \frac{Cn}{1-\alpha - \beta} + \Oh(n^\zeta).
	\]
	
	Now, let us consider the recurrence for $Q(n)$. With the same argument as before we have $Q(n) \leq  Q(\alpha n)  + \gamma n \log(\gamma n)  + Cn +   \Oh(n^\delta)$. Thus, we obtain
	\begin{align*}
	Q(n) 	&\leq \sum_{i=0}^{\log_\alpha n} \Bigl( \alpha^i \gamma n \log (\alpha^i\gamma n) + C \alpha^in +   \Oh((\alpha^in)^\delta)\Bigr)\\
	&= n \sum_{i\geq 0} \Bigl( \alpha^i \gamma \left( \log n +  i \log (\alpha) + \log \gamma \right) + C \alpha^i\Bigr) +   \Oh(n^\delta)\\
	&= \frac{\gamma }{1-\alpha} n \log n +  \left(\frac{ \alpha \gamma \log \alpha }{(\alpha - 1)^2}+ \frac{ \gamma \log \gamma }{1-\alpha}  +\frac{C}{1-\alpha}\right) n+     \Oh(n^\delta)\\
	&= n \log n +  \left(\frac{ \alpha  \log \alpha }{1-\alpha}+ \log \gamma  +\frac{C}{1-\alpha}\right) n +   \Oh(n^\delta).
	\end{align*}
	This proves \prettyref{lem:recurrence}.
\end{proof}

\subsection{QuickMergesort}\label{sec:quickXsort}\label{sec:qms}

QuickMergesort follows the design pattern of QuickXsort: let X be some sorting algorithm (in our case X = Mergesort).  {QuickXsort} works as follows:
first, choose some pivot element and partition
the array according to this pivot, i.\,e., rearrange it
such that all elements left of the pivot are less or equal and all
elements on the right are greater than or equal to the pivot element. Then, choose one part of the array and sort
it with the algorithm X. After that, sort the other part of the array
recursively with {QuickXsort}. 
The main advantage of this procedure is that the part of the array
that is not being sorted currently can be used as temporary memory for
the algorithm X. This yields fast in-place variants for various
\emph{external} sorting algorithms such as {Mergesort}. The idea is
that whenever a data element should be moved to the extra (additional
or external) element space, instead it is swapped with the data
element occupying the respective position in part of the array which
is used as temporary memory.

The most promising example for \QuickXsort{} is {QuickMergesort}. For the {Mergesort}
part we use standard (top-down) {Mergesort}, which can be implemented
using $m$ extra element spaces to merge two arrays of length $m$:
after the partitioning, one part of the array -- for a simpler description we assume the first
part -- has to be sorted with {Mergesort} (note, however, that any of the two sides can be sorted with Mergesort as long as the other side contains at least $n/3$ elements). 
In order to do so, the
second half of this first part is sorted recursively with {Mergesort}
while moving the elements to the back of the whole array. The elements
from the back of the array are inserted as dummy elements into the
first part. Then, the first half of the first part is sorted recursively
with {Mergesort} while being moved to the position of the former
second half of the first part. Now, at the front of the array, there is enough space
(filled with dummy elements) such that the two halves can be merged.
The executed stages of the algorithm {QuickMergesort} are illustrated in \prettyref{fig:sample}.
\begin{figure}[t]
	\footnotesize
\begin{center} \hspace{0.12cm}
\begin{tikzpicture}[scale=.75]
{
\draw(-2,4) -- (6.4,4);
\draw(-2,4.7) -- (6.4,4.7);
\draw(-2,4) -- (-2,4.7);
\draw(-1.3,4) -- (-1.3,4.7);
\node[] (2) at (-0.95,4.35) {$11$};
\draw(-0.6,4) -- (-0.6,4.7);
\node[] (3) at (-0.25,4.35) {$4$};
\draw(0.1,4) -- (0.1,4.7);
\node[] (4) at (.45,4.35) {$5$};
\draw(.8,4) -- (.8,4.7);
\node[] (5) at (1.15,4.35) {$6$};
\draw(1.5,4) -- (1.5,4.7);
\node[] (6) at (1.85,4.35) {$10$};
\draw(2.2,4) -- (2.2,4.7);
\node[] (7) at (2.55,4.35) {$9$};
\draw(2.9,4) -- (2.9,4.7);
\node[] (8) at (3.25,4.35) {$2$};
\draw(3.6,4) -- (3.6,4.7);
\node[] (9) at (3.95,4.35) {$3$};
\draw(4.3,4) -- (4.3,4.7);
\node[] (10) at (4.65,4.35) {$1$};
\draw(5.0,4) -- (5.0,4.7);
\node[] (11) at (5.35,4.35) {$0$};
\draw(5.7,4) -- (5.7,4.7);
\node[] (12) at (6.05,4.35) {$8$};
\draw(6.4,4) -- (6.4,4.7);
\node[] (1) at (-1.65,4.35) {${\red 7}$};
}
\end{tikzpicture}

partitioning leads to 

\vspace{0.1cm}
\begin{tikzpicture}[scale=.75]
{
\draw(-2,4) -- (6.4,4);
\draw(-2,4.7) -- (6.4,4.7);
\draw(-2,4) -- (-2,4.7);
\node[] (1) at (-1.65,4.35) {$3$};
\draw(-1.3,4) -- (-1.3,4.7);
\node[] (2) at (-0.95,4.35) {$2$};
\draw(-0.6,4) -- (-0.6,4.7);
\node[] (3) at (-0.25,4.35) {$4$};
\draw(0.1,4) -- (0.1,4.7);
\node[] (4) at (.45,4.35)   {$5$};
\draw(.8,4) -- (.8,4.7);
\node[] (5) at (1.15,4.35)  {$6$};
\draw(1.5,4) -- (1.5,4.7);
\node[] (6) at (1.85,4.35)  {$0$};
\draw(2.2,4) -- (2.2,4.7);
\node[] (7) at (2.55,4.35)  {$1$};
\draw(2.9,4) -- (2.9,4.7);
\draw[thick](3.6,4) -- (3.6,4.7);
\node[] (9) at (3.95,4.35)  {$9$};
\draw(4.3,4) -- (4.3,4.7);
\node[] (10) at (4.65,4.35) {$10$};
\draw(5.0,4) -- (5.0,4.7);
\node[] (11) at (5.35,4.35) {$11$};
\draw(5.7,4) -- (5.7,4.7);
\node[] (12) at (6.05,4.35)  {$8$};
\draw(6.4,4) -- (6.4,4.7);
\node[] (8) at (3.25,4.35) {${\red 7}$};
}

{
\node[]  at (1.5,3.85) {$\underbrace{\hspace{21.2mm}}$};
\node[] (ubxx) at (1.5,3.85) {};
\node[]  at (0.2,3.5) {\footnotesize sort recursively};
\node[]  at (3.1,3.5) {\footnotesize with Mergesort};

\draw[thick] (.8,4) -- (.8,4.7);
\draw[thick] (2.9,4) -- (2.9,4.7);
}

\begin{scope}[shift={(0,-1.7)}]
{
\draw(-2,4) -- (6.4,4);
\draw(-2,4.7) -- (6.4,4.7);
\draw(-2,4) -- (-2,4.7);
\node[] (1) at (-1.65,4.35) {$3$};
\draw(-1.3,4) -- (-1.3,4.7);
\node[] (2) at (-0.95,4.35) {$2$};
\draw(-0.6,4) -- (-0.6,4.7);
\node[] (3) at (-0.25,4.35) {$4$};
\draw(0.1,4) -- (0.1,4.7);
\node[] (4) at (.45,4.35) {$11$};
\draw(.8,4) -- (.8,4.7);
\node[] (5) at (1.15,4.35) {$9$};
\draw(1.5,4) -- (1.5,4.7);
\node[] (6) at (1.85,4.35) {$10$};
\draw(2.2,4) -- (2.2,4.7);
\node[] (7) at (2.55,4.35) {$8$};
\draw(2.9,4) -- (2.9,4.7);
\node[] (1) at (3.25,4.35) {\red $7$};
\draw[thick](3.6,4) -- (3.6,4.7);
\node[] (9) at (3.95,4.35)  {\blue $0$};
\draw(4.3,4) -- (4.3,4.7);
\node[] (10) at (4.65,4.35) {\blue $1$};
\draw(5.0,4) -- (5.0,4.7);
\node[] (11) at (5.35,4.35) {\blue $5$};
\draw(5.7,4) -- (5.7,4.7);
\node[] (12) at (6.05,4.35) {\blue $6$};
\draw(6.4,4) -- (6.4,4.7);
}
{  
\draw[->] (ubxx) ..controls +(0,-0.9) and (5,5.1).. (5.0,4.7);}
{
  \node[] (ubxxx) at (-0.95,3.85) {};
\node[] (1) at (2,3.48) {\footnotesize sort recursively with Mergesort};
\draw[thick] (.8,4) -- (.8,4.7);
\draw[thick] (2.9,4) -- (2.9,4.7);
\node[] (1) at (-0.95,3.85) {$\underbrace{\hspace{15.9mm}}$};}
\end{scope}

\begin{scope}[shift={(0,-3.4)}]

{
\draw(-2,4) -- (6.4,4);
\draw(-2,4.7) -- (6.4,4.7);
\draw(-2,4) -- (-2,4.7);
\node[] (1) at (-1.65,4.35) { $9$};
\draw(-1.3,4) -- (-1.3,4.7);
\node[] (2) at (-0.95,4.35) { $10$};
\draw(-0.6,4) -- (-0.6,4.7);
\node[] (3) at (-0.25,4.35) { $8$};
\draw(0.1,4) -- (0.1,4.7);
\node[] (4) at (.45,4.35) {$11$};
\draw(.8,4) -- (.8,4.7);
\node[] (5) at (1.15,4.35) {\darkgreen$2$};
\draw(1.5,4) -- (1.5,4.7);
\node[] (6) at (1.85,4.35) {\darkgreen$3$};
\draw(2.2,4) -- (2.2,4.7);
\node[] (7) at (2.55,4.35) {\darkgreen $4$};
\draw(2.9,4) -- (2.9,4.7);
\node[] (1) at (3.25,4.35) {\red $7$};
\draw[thick](3.6,4) -- (3.6,4.7);
\node[] (9) at (3.95,4.35)  {\blue $0$};
\draw(4.3,4) -- (4.3,4.7);
\node[] (10) at (4.65,4.35) {\blue $1$};
\draw(5.0,4) -- (5.0,4.7);
\node[] (11) at (5.35,4.35) {\blue $5$};
\draw(5.7,4) -- (5.7,4.7);
\node[] (12) at (6.05,4.35) {\blue $6$};
\draw(6.4,4) -- (6.4,4.7);
}

{  
\draw[->] (ubxxx) ..controls +(0,-0.9) and (1.85,5.1).. (1.85,4.7);}

{\node[] (1) at (1.85,3.85) {$\underbrace{\hspace{15.9mm}}$};
\node[] (1) at (5.0,3.85) {$\underbrace{\hspace{21.2mm}}$};
  \node[] (ubxxxx) at (1.85,3.85) {};
    \node[] (ubxxxxx) at (5,3.85) {};
\node[] (uba) at (-1,3.38) {\footnotesize merge two parts};
}

\end{scope}

\begin{scope}[shift={(0,-5.1)}]

{
\draw[-] (ubxxxxx) ..controls +(0,-0.5) and (.45,5.4) .. (.45,4.9);
\draw[-] (ubxxxx) ..controls +(0,-0.5) and (.45,5.4) .. (.45,4.9);
\draw[->] (.45,5)-- (.45,4.7);
}

{
\draw(-2,4) -- (6.4,4);
\draw(-2,4.7) -- (6.4,4.7);
\draw(-2,4) -- (-2,4.7);

\draw(-1.3,4) -- (-1.3,4.7);

\draw(-0.6,4) -- (-0.6,4.7);

\draw(0.1,4) -- (0.1,4.7);

\draw(.8,4) -- (.8,4.7);

\draw(.8,4) -- (.8,4.7);

\draw(1.5,4) -- (1.5,4.7);

\draw(2.2,4) -- (2.2,4.7);

\draw(2.9,4) -- (2.9,4.7);

\draw[thick](3.6,4) -- (3.6,4.7);

\draw(4.3,4) -- (4.3,4.7);

\draw(5.0,4) -- (5.0,4.7);

\draw(5.7,4) -- (5.7,4.7);

\draw(6.4,4) -- (6.4,4.7);
}
{

  \node[] (6) at (-1.65,4.35) {\blue $0$};
    \node[] (7) at (-0.95,4.35) {\blue  $1$};
      \node[] (1) at (-0.25,4.35) {\darkgreen $2$};
      \node[] (4) at (.45,4.35) {\darkgreen $3$};
      \node[] (5) at (1.15,4.35) {\darkgreen $4$};
      \node[] (1) at (1.85,4.35) {\blue $5$};
        \node[] (9) at (2.55,4.35) {\blue $6$};
\node[] (10) at (3.25,4.35) {\red $7$};
  \node[] (11) at (3.95,4.35) {$11$};
   \node[] (12) at (4.65,4.35) {$9$};
  \node[] (2) at (5.35,4.35) {$8$};
  \node[] (3) at (6.05,4.35) {$10$};

}

{\node[] (1) at (5.0,3.85) {$\underbrace{\hspace{21.2mm}}$};
\node[] (1) at (3.25,3.5) {\footnotesize sort recursively with \QuickMergesort};}

\end{scope}
\end{tikzpicture}%
\end{center}
\vspace{-4mm}
\caption{\small Example for the execution of {QuickMergesort}. Here 7 is chosen as pivot.}
\label{fig:sample}
\end{figure}
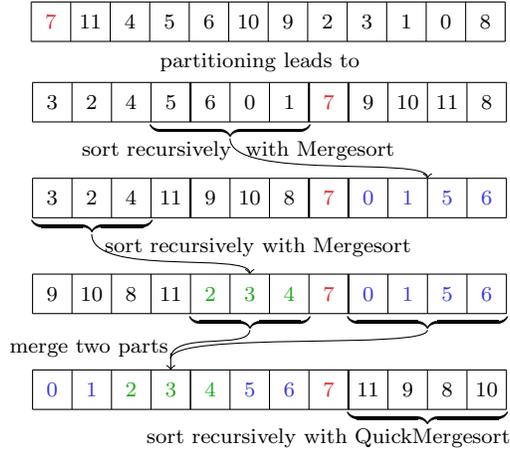

\subsection{The median-of-medians algorithm}\label{sec:mom}
The median-of-medians algorithm solves the selection problem: given an array $A[1,\dots, n]$ and an integer $k\in \oneset{1, \dots, n}$ find the $k$-th element in the sorted order of $A$.
For simplicity let us assume that all elements are distinct~-- in \prettyref{sec:duplicates} we show how to deal with the general case with duplicates.

The basic variant of the median-of-medians algorithm \cite{BFPRT73} (see also \cite[Sec.\ 9.3]{CLRS09}) works as follows: first, the array is grouped into blocks of five elements. From each of these blocks the median is selected and then the median of all these medians is computed recursively. This yields a provably good pivot for performing a partitioning step. Now, depending on which side the $k$-th element is, recursion takes place on the left or right side. It is well-known that this algorithm runs in linear time with a rather big constant in the $\Oh$-notation. We use a slight improvement:

\smallskip
\paragraph{Repeated step algorithm.}
Instead of grouping into blocks of $5$ elements, we follow \cite{ChenD15} and group into blocks of $9$ elements and take the pseudomedian (``ninther'') into the sample for pivot selection. This method guarantees that every element in the sample has $4$ elements less or equal and $4$ element greater or equal to it. Thus, when selecting the pivot as median of the sample of $n/9$ elements, the guarantee is that at least $2n/9$ elements are less or equal and the same number greater or equal to the pivot. Since there might remain 8 elements outside the sample we obtain the recurrence
$$T_{\MOM}\!\!\;(n) \leq T_{\MOM}\!\!\left(\floor{\!\frac{7n}{9}}\!+\!8\!\right) + T_{\MOM}\!\left(\floor{\frac{n}{9}}\right) +  \frac{20n}{9},$$
 where ${4n}/{3}$ of ${20n}/{9}$ is due to finding the pseudomedians and ${8n}/{9}$ is for partitioning the remaining (non-pseudomedian) elements according to the pivot (notice that also some of the other elements are already known to be greater/smaller than the pivot; however, using this information would introduce a huge bookkeeping overhead). Thus, by \prettyref{lem:recurrence}, we have:
\begin{lemma}[\cite{Alexandrescu17,ChenD15}]\label{lem:mom}
	$T_{\MOM}(n) \leq 20n + \Oh(n^\zeta) $ where $\zeta \approx 0.78$  satisfies $(7/9)^\zeta + (1/9)^\zeta = 1$.
\end{lemma}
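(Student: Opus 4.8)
The plan is to recognize the displayed recurrence for $T_{\MOM}$ as an instance of the first ($T$-type) recurrence in \prettyref{lem:recurrence} and simply read off the constants. Concretely, I would set $\alpha = 7/9$, $\beta = 1/9$, $C = 20/9$, and $A = 8$. Then $\alpha + \beta = 8/9 < 1$ and $\gamma = 1-\alpha = 2/9$, so the hypotheses of \prettyref{lem:recurrence} are met, and the exponent $\zeta$ appearing in its conclusion is by definition the solution of $\alpha^\zeta + \beta^\zeta = (7/9)^\zeta + (1/9)^\zeta = 1$, which one checks numerically to be $\approx 0.78$.

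Before invoking the lemma I would put the given recurrence into the exact template shape $T(n) \le T(\ceil{\alpha n}+A) + T(\ceil{\beta n}+A) + Cn + D$. Two harmless adjustments suffice. First, $T_{\MOM}$ is nondecreasing (selecting in a larger array is at least as costly), so I may replace each floor by the corresponding ceiling and, in the second summand, insert the missing $+8 = +A$; since the recursive arguments only grow and the cost is monotone, the right-hand side can only increase, so the inequality is preserved. Second, the stated recurrence carries no additive constant, but adding a nonnegative $D$ — chosen large enough to also dominate the finitely many base cases $n \le N_0$, where $T_{\MOM}(n)$ is trivially bounded — only weakens the bound. After these steps the recurrence matches the template with the constants above.

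Applying \prettyref{lem:recurrence} then yields $T_{\MOM}(n) \le \frac{Cn}{1-\alpha-\beta} + \Oh(n^\zeta)$. The leading coefficient evaluates to $\frac{20/9}{\,1 - 7/9 - 1/9\,} = \frac{20/9}{1/9} = 20$, giving exactly $T_{\MOM}(n) \le 20n + \Oh(n^\zeta)$, as claimed.

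There is essentially no genuine obstacle here: all the analytic content already resides in \prettyref{lem:recurrence}, and what remains is the routine verification that this concrete recurrence fits its hypotheses. The only point deserving a word of care is the monotonicity-and-rounding reduction to the clean template form; once that is granted, the bound is immediate.
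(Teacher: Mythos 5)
Your proposal is correct and matches the paper's own argument exactly: the paper likewise derives the bound by plugging $\alpha = 7/9$, $\beta = 1/9$, $C = 20/9$ into \prettyref{lem:recurrence} and computing $\frac{20/9}{1-8/9} = 20$, with the rounding issues already absorbed into that lemma's statement. Your extra care about monotonicity and the additive constant is harmless and consistent with how \prettyref{lem:recurrence} is set up.
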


\smallskip
\paragraph{Adaptive pivot selection.} For our implementation we apply a slight improvement over the basic median-of-medians algorithm by using the approach of adaptive pivot selection, which is first used in the Floyd-Rivest algorithm \cite{FloydR75a,FloydR75}, later applied to smaller samples for Quickselect \cite{MartinezPV04,MartinezPV10}, and recently applied to the median-of-medians algorithm \cite{Alexandrescu17}. 
However, we use a different approach than in \cite{Alexandrescu17}: in any case we choose the sample of size $n/9$ as pseudomedians of nine. Now, if the position we are looking for is on the far left (left of position $2n/9$), we do not choose the median of the sample as pivot but a smaller position:
for searching the $k$-th element with $k \leq 2n/9$, we take the $\ceil{k/4}$-th element of the sample as pivot. Notice that for $k = 2n/9$, this is exactly the median of the sample. Since every element of the sample carries at least four smaller elements with it, this guarantees that 
$\frac{k}{4} \cdot 4 = k$ elements are smaller than or equal to the pivot~-- so the $k$-th element will lie in the left part after partitioning (which is presumably the smaller one). 
Likewise when searching a far right position, we proceed symmetrically.

Notice that this optimization does not improve the worst-case but the average case (see \prettyref{sec:average}).

\subsection{Dealing with duplicate elements}\label{sec:duplicates}

With duplicates we mean that not all elements of the input array are distinct. The number of comparisons for finding the median of three (resp.\ five) elements does not change in the presence of duplicates. However, duplicates can lead to an uneven partition. 
The standard approach in Quicksort and Quickselect for dealing with duplicates is due to Bentley and McIlroy \cite{BentleyM93}: in each partitioning step the elements equal to the pivot are placed in a third partition in the middle of the array. Recently, another approach appeared in the Quicksort implementation pdqsort \cite{pdqsort}.
Instead of three-way partitioning it applies the usual two-way partitioning moving elements equal to the pivot always to the right side. This method is also applied recursively~-- with one exception: if the new pivot is equal to an old pivot (this can be tested with one additional comparison), then all elements equal to the pivot are moved to the left side, which then can be excluded from recursion. 

We propose to follow the latter approach: usually all elements equal to the pivot are moved to the right side -- possibly leading to an even unbalanced partitioning.
However, whenever a partitioning step is very uneven (outside the guaranteed bounds for the pivot in the median-of-medians algorithm), we know that this must be due to many duplicate elements. In this case we immediately partition again with the same pivot but moving equal elements to the left.

\section{Median-of-Medians QuickMergesort}\label{sec:worstcase}\label{sec:momqms}

Although QuickMergesort has an $\Oh(n^2)$ worst-case running time, it is quite simple to guarantee a worst-case number of comparisons of $n \log n + \Oh(n)$: just choose the median of the whole array as pivot. This is essentially how in-situ Mergesort~\cite{ElmasryKS12} works. The most efficient way for finding the median is using Quickselect \cite{Hoare61_find} as applied in in-situ Mergesort. However, this does not allow the desired bound on the number of comparisons (even not when using Introselect as in~\cite{ElmasryKS12}). Alternatively, we can use the median-of-medians algorithm described in \prettyref{sec:mom}, which, while having a linear worst-case running time, on average is quite slow. In this section we describe a variation of the median-of-medians approach which combines an $n \log n + \Oh(n)$ worst-case number of comparisons with a good average performance (both in terms of running time and number of comparisons).

\subsection{Basic version}\label{sec:basicQMS}

The crucial observation is that it is not necessary to use the actual median as pivot (see also our preprint \cite{EdelkampW18QMSArxiv}). As remarked in \prettyref{sec:quickXsort}, the larger of the two sides of the partitioned array can be sorted with Mergesort as long as the smaller side contains at least one third of the total number of elements. Therefore, it suffices to find a pivot which guarantees such a partition. For doing so, we can apply the idea of the median-of-medians algorithm: for sorting an array of $n$ elements, we choose first $n/3$ elements as median of three elements each. Then, the median-of-medians algorithm is used to find the median of those $n/3$ elements. This median becomes the next pivot. Like for the median-of-medians algorithm, this ensures that at least $2\cdot\floor{n/6}$ elements are less or equal and at least the same number of elements are greater or equal than the pivot~-- thus, always the larger part of the partitioned array can be sorted with Mergesort and the recursion takes place on the smaller part.
The advantage of this method is that the median-of-medians algorithm is applied to an array of size only $n/3$ instead of $n$ (with the cost of introducing a small overhead for finding the $n/3$ medians of three)~-- giving less weight to its big constant for the linear number of comparisons. 
We call this algorithm \emph{basic MoMQuickMergesort} (\bMQMS).

  For the median-of-medians algorithm, we use the repeated step method as described in \prettyref{sec:mom}. Notice that for the number of comparisons the worst case for MoMQuickMergesort happens if the pivot is exactly the median since this gives the most weight on the ``slow'' median-of-medians algorithm.
 Thus, the total number $T_{\bMQMS}(n)$ of comparisons of MoMQuickMergesort in the worst case to sort $n$ elements is bounded by
\begin{align*}T_{\bMQMS}(n) &\leq T_{\bMQMS}\left(\frac{n}{2}\right) + T_{\MS}\left(\frac{n}{2}\right) + T_{\MOM}\left(\frac{n}{3}\right) + 3\cdot\frac{n}{3}  + \frac{2}{3}n + \Oh(1)\end{align*}
where $T_{\MS}(n)$ is the number of comparisons of Mergesort and $T_{\MOM}(n)$ the number of comparisons of the median-of-medians algorithm. The $3\cdot\frac{n}{3}$-term comes from finding $n/3$ medians of three elements, the $2n/3$ comparisons from partitioning the remaining elements (after finding the pivot, the correct side of the partition is known for $n/3$ elements).

By \prettyref{lem:mom} we have $T_{\MOM}(n) \leq 20n + \Oh(n^{0.8})$ and by \cite{weisstein} we have $T_{\MS}(n) \leq n\log n - 0.91n + 1$. Thus, we can use \prettyref{lem:recurrence} to resolve the recurrence, which proves (notice that for every comparison there is only a constant number of other operations):

\begin{theorem}\label{thm:wc_basic}
Basic MoMQuickMergesort (\bMQMS) runs in $\Oh(n \log n)$ time and performs at most $n \log n + 13.8n + \Oh(n^{0.8})$ comparisons.
\end{theorem}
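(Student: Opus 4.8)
The strategy is to treat the displayed recurrence for $T_{\bMQMS}$ as given, substitute closed-form bounds for each of its non-recursive summands, massage the result into the canonical shape of the $Q$-recurrence in \prettyref{lem:recurrence}, and read off the solution. The only genuine recursive call to $T_{\bMQMS}$ is on an array of size $n/2$, so I would instantiate \prettyref{lem:recurrence} with $\alpha = 1/2$ and hence $\gamma = 1-\alpha = 1/2$; every other summand is non-recursive and must be folded into the driving term $\gamma n \log(\gamma n) + Cn + \Oh(n^\delta)$.

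First I would substitute the two imported bounds. By \prettyref{lem:mom}, $T_{\MOM}(n/3) \leq 20\cdot\frac{n}{3} + \Oh(n^{0.8}) = \frac{20}{3}n + \Oh(n^{0.8})$, which pins down $\delta = 0.8$. For Mergesort I would expand $T_{\MS}(n/2) \leq \frac{n}{2}\log\frac{n}{2} - 0.91\cdot\frac{n}{2} + 1$ and crucially observe that its leading term $\frac{n}{2}\log\frac{n}{2}$ is exactly $\gamma n \log(\gamma n)$ for $\gamma = 1/2$. This is precisely what lets the recurrence match the required form, with the linear residual of Mergesort left behind to be counted in $C$.

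Next I would collect every linear contribution into the single constant $C$: the residual $-0.455\,n$ from expanding Mergesort, the $\frac{20}{3}n$ from the median-of-medians call, the $3\cdot\frac{n}{3} = n$ for computing the $n/3$ medians of three, and the $\frac{2}{3}n$ for partitioning the remaining elements. This gives $C = -0.455 + \frac{20}{3} + 1 + \frac{2}{3} \approx 7.88$. Applying \prettyref{lem:recurrence} with $\alpha = \gamma = 1/2$ then yields the linear coefficient $\frac{\alpha\log\alpha}{\gamma} + \log\gamma + \frac{C}{\gamma} = -1 - 1 + 2C \approx 13.76$, which is below the claimed $13.8$, while the $\Oh(n^{0.8})$ error is inherited directly. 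The running-time bound of $\Oh(n\log n)$ follows because, as noted just before the theorem, each comparison is accompanied by only $\Oh(1)$ further operations, so the total work obeys the same recurrence up to constant factors.

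The step to watch is not conceptual but arithmetic: the whole constant turns on correctly separating $T_{\MS}(n/2)$ into its $\gamma n \log(\gamma n)$ leading term and its linear remainder — in particular, tracking that $\log\frac{n}{2} = \log n - 1$ moves an extra $-\frac{n}{2}$ into the linear budget — since a slip here is exactly what would push the final constant above $13.8$. I would additionally verify that the ceiling/floor discrepancies, the stray additive constants (the $+A$ and $+8$ terms hidden inside the $T_{\MOM}$ recurrence) and the $\Oh(1)$ in the $T_{\bMQMS}$ recurrence are all harmless; but \prettyref{lem:recurrence} is stated precisely so as to absorb such corrections into the $\Oh(n^{0.8})$ term, so no separate smoothing argument should be needed.
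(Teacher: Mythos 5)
Your proposal is correct and follows exactly the paper's route: take the displayed recurrence for $T_{\bMQMS}$, substitute the bounds from \prettyref{lem:mom} and for $T_{\MS}$, and apply \prettyref{lem:recurrence} with $\alpha=\gamma=1/2$; your arithmetic ($C\approx 7.88$, linear coefficient $-2+2C\approx 13.76\le 13.8$) matches what the paper's terse proof implicitly computes. No gaps.
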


\subsection{Improved version}\label{sec:impQMS}

In \cite{Reinhardt92}, Reinhardt describes how to merge two subsequent sequences in an array using additional space for only half the number of elements in one of the two sequences. The additional space should be located in front or after the two sequences. To be more precise, assume we are given an array $A$ with positions $A[1, \dots , t]$ being empty or containing dummy elements (to simplify the description, we assume the first case), $A[t+1, \dots, t + \ell ]$ and $A[t + \ell+1,\dots, t + \ell + r ]$ containing two sorted sequences. We wish to merge the two sequences into the space $A[1, \dots, \ell +r]$ (so that $A[\ell+ r+1, \dots, t + \ell + r]$ becomes empty).
 We require that $r/2 \leq t < r $.

 First we start from the left merging the two sequences into the empty space until there remains no empty space between the last element of the already merged part and the first element of the left sequence (first step in \prettyref{fig:merge5}). At this point, we know that at least $t$ elements of the right sequence have been introduced into the merged part (because when introducing elements from the left part, the distance between the last element in the already merged part and the first element in the left part does not decrease). Thus, the positions $t + \ell  + 1$ through $\ell +2t$ are empty now. Since $\ell + t + 1 \leq \ell + r \leq \ell + 2t$, in particular, $A[\ell + r]$ is empty now.  
 Therefore, we can start merging the two sequences right-to-left into the now empty space (where the right-most element is moved to position $A[\ell + r]$~-- see the second step in \prettyref{fig:merge5}). Once the empty space is filled, we know that all elements from the right part have been inserted, so $A[1, \dots, \ell + r]$ is sorted and  $A[\ell+ r+1, \dots, t + \ell + r]$ is empty (last step in \prettyref{fig:merge5}).

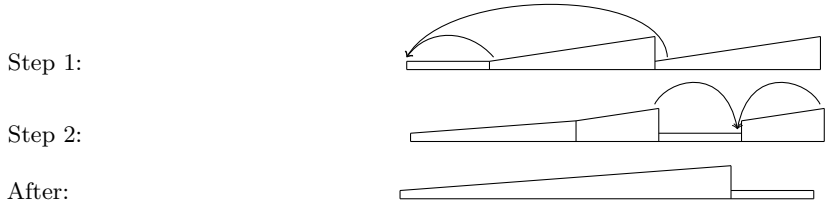
\begin{figure}[t]

\small
Step 1:\hfil
	\begin{scriptsize}
		\begin{tikzpicture}[scale = 0.55]
		
		\draw(0,0) -- (10,0);
		\draw(0,0.2) -- (2,0.2);
		\draw(2,0.2) -- (6,0.8);
		\draw(0,0) -- (0,0.2);
		\draw(6,0.2) -- (10,0.8);

		\draw[->](6.3,.3) ..controls(6.2,2) and (1, 2 ).. (0,.3);
		\draw[->](2.1,.3) ..controls(1.5,1) and (.5, 1 ).. (0,.3);
		
		\draw(2,0) -- (2,0.2);
		\draw(6,0) -- (6,0.8);
		\draw(10,0) -- (10,0.8);
		\end{tikzpicture}
	\end{scriptsize}
\hfill

\vspace{-2mm}
Step 2:\hfil
	\begin{scriptsize}
		\begin{tikzpicture}[scale = 0.55]
		
		\draw(0,0) -- (10,0);
		\draw(0,0.2) -- (4,0.5);
		\draw(4,0.5) -- (6,0.8);
		\draw(6,0.2) -- (8,0.2);
		\draw(0,0) -- (0,0.2);
		\draw(8,0.5) -- (10,0.8);

	\draw[->](5.9,.9) ..controls(6.2,1.5) and (7.6, 1.9 ).. (7.9,.3);
	\draw[->](9.9,.9) ..controls(9.6,1.5) and (8.2, 1.9 ).. (7.9,.3);
		
		\draw(4,0) -- (4,0.5);
		\draw(6,0) -- (6,0.8);
		\draw(8,0) -- (8,0.5);
		\draw(10,0) -- (10,0.8);
		\end{tikzpicture}
	\end{scriptsize}

\medskip

After:\hfil	
\begin{scriptsize}
		\begin{tikzpicture}[scale = 0.55]
		
		\draw(0,0) -- (10,0);
		\draw(0,0.2) -- (8,0.8);
		\draw(8,0.2) -- (10,0.2);

		\draw(0,0) -- (0,0.2);
		\draw(8,0) -- (8,0.8);
		\draw(10,0) -- (10,0.2);
		\end{tikzpicture}
	\end{scriptsize}
\hfill

	\caption{\small In the first step the two sequences are merged starting with the smallest elements until the empty space is filled. Then there is enough empty space to merge the sequences from the right into its final position.}\label{fig:merge5}
\end{figure}

When choosing $\ell=r$ (in order to have a balanced merging and so an optimal number of comparisons), we need one fifth of the array as temporary space. Moreover, by allowing a slightly imbalanced merge we can also tolerate slightly less temporary space. In the case that the temporary space is large ($t \geq r$), we apply the merging scheme from \prettyref{sec:qms}. The situation where the temporary space is located after the two sorted sequences is handled symmetrically (note that this changes the requirement to $\ell/2 \leq t < \ell $).

By applying this merging method in MoMQuickMergesort, we can use pivots having much weaker guarantees: instead of one third, we need only one fifth of the elements being less (resp.\ greater) than the pivot.
 We can find such pivots by applying an idea similar to the repeated step method for the median-of-medians algorithm: first we group into blocks of fifteen elements and compute the pseudomedians of each group. Then, the pivot is selected as median of these pseudomedians; it is computed using the median-of-medians algorithm. This guarantees that at least $2\cdot 3 \cdot \floor{\frac{n}{3\cdot 5 \cdot 2}} \approx \frac{n}{5}$ elements are less than or equal to (resp.\ greater than or equal to) the pivot. Computing the pseudomedian of 15 elements requires 22 comparisons (five times three comparisons for the medians of three and then seven comparisons for the median of five). After that, partitioning requires $14/15n$ comparisons.
Since still in any case the larger half can be sorted with Mergesort, we get the recurrence (we call this algorithm \emph{MoMQuickMergesort} (\MQMS))
\begin{align*}
T_{\MQMS}(n) &\leq T_{\MQMS}(n/2) + T_{\MS}(n/2) + T_{\MOM}(n/15) + \frac{22}{15}n + \frac{14}{15}n + \Oh(1)\\
&\leq  T_{\MQMS}(n/2) + \frac{n}{2}\log(n/2) - \frac{0.91 n}{2}+ \frac{20}{15}n + \frac{36}{15}n + \Oh(n^{0.8})\\
& \leq  n \log n - 0.91 n - 2n + \frac{112}{15}n + \Oh(n^{0.8})\tag{by \prettyref{lem:recurrence}} 
\end{align*}
This proves:
\begin{theorem}\label{thm:wc}
	 MoMQuickMergesort (\MQMS) runs in $\Oh(n \log n)$ time and performs at most $n \log n + 4.57n + \Oh(n^{0.8})$ comparisons.
\end{theorem}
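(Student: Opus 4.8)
The plan is to reduce the claim to a single divide-and-conquer recurrence for the worst-case number of comparisons and then solve it by invoking \prettyref{lem:recurrence}. The running-time bound comes for free: exactly as observed for \bMQMS, every comparison is accompanied by only $\Oh(1)$ additional work, the grouping and partitioning are linear, and Reinhardt's in-place merge from \prettyref{sec:impQMS} runs in linear time; hence the total running time stays within a constant factor of the comparison count, which will be $\Oh(n \log n)$.

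First I would set up the recurrence. One top-level call groups the $n$ elements into blocks of fifteen, computes the $n/15$ pseudomedians (each at the cost of $22$ comparisons, \ie $\tfrac{22}{15}n$ in total), and selects the pivot as the median of these pseudomedians via the repeated-step median-of-medians algorithm of \prettyref{sec:mom}, contributing $T_{\MOM}(n/15)$. By the guarantee established in \prettyref{sec:impQMS}, at least roughly $n/5$ elements lie on each side of the pivot, so the larger part can be sorted in place with Mergesort using Reinhardt's scheme, while partitioning the remaining non-pseudomedian elements costs $\tfrac{14}{15}n$ comparisons. This yields $T_{\MQMS}(n) \le T_{\MQMS}(s) + T_{\MS}(\ell) + T_{\MOM}(n/15) + \tfrac{22}{15}n + \tfrac{14}{15}n + \Oh(1)$, where $s \le \ell$ is the smaller side of the partition.

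The crux is to argue that the worst case is the balanced split $s = \ell = n/2$. Since the recursive call uses the comparatively expensive \MQMS\ (whose linear coefficient exceeds that of Mergesort) whereas the larger side is handled by the cheaper Mergesort, shifting weight onto the recursive side can only increase the bound; as the one-fifth guarantee permits the smaller side to grow up to $n/2$, the extremal case is $s = n/2$. Substituting $T_{\MOM}(n/15) \le \tfrac{20}{15}n + \Oh(n^{0.8})$ from \prettyref{lem:mom} and the Mergesort bound $T_{\MS}(n/2) \le \tfrac{n}{2}\log(n/2) - \tfrac{0.91}{2}n + 1$ casts the recurrence into the $Q$-form of \prettyref{lem:recurrence} with $\alpha = \gamma = \tfrac12$ and linear constant $C = -0.455 + \tfrac{56}{15}$. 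Reading off the closed form gives the coefficient $\tfrac{\alpha \log \alpha}{\gamma} + \log \gamma + \tfrac{C}{\gamma} = -1 - 1 + 2C \approx 4.556 \le 4.57$, with the error term $\Oh(n^{0.8})$ inherited from \prettyref{lem:mom}.

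I expect the only genuine subtlety to be this worst-case-split argument together with the bookkeeping of the linear constants: one must verify that the pseudomedian-of-fifteen construction really delivers the one-sided $n/5$ guarantee that Reinhardt's merge needs for scratch space, and that rounding $4.556\ldots$ upward to $4.57$ is safe. Once those are in place, the remaining derivation is a mechanical application of \prettyref{lem:recurrence}.
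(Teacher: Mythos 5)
Your proposal is correct and follows essentially the same route as the paper: the identical recurrence $T_{\MQMS}(n) \leq T_{\MQMS}(n/2) + T_{\MS}(n/2) + T_{\MOM}(n/15) + \tfrac{22}{15}n + \tfrac{14}{15}n + \Oh(1)$, the same substitutions from \prettyref{lem:mom} and the Mergesort bound, and the same application of \prettyref{lem:recurrence} yielding $-2 + 2\bigl(-0.455 + \tfrac{56}{15}\bigr) \approx 4.556 \leq 4.57$. The only cosmetic difference is your justification for taking the balanced split as the worst case (you argue \MQMS{} is costlier per element than Mergesort, whereas the paper points to the balanced split maximizing the weight on the slow median-of-medians call), but both lead to the same recurrence the paper writes down directly.
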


Notice that when computing the median of pseudomedians of fifteen elements, in the worst case approximately the same effort goes into the calculation of the pseudomedians and into the median-of-medians algorithm. This indicates that it is an efficient method for finding a pivot with the guarantee that one fifth are greater or equal (resp.\ less or equal).

\subsection{Undersampling}\label{sec:small_samples}

In \cite{Alexandrescu17} Alexandrescu selects pivots for the median-of-medians algorithm not as medians of medians of the whole array but only of $n/\phi$ elements where $\phi$ is some large constant (similar as in \cite{Kurosawa16} for Quicksort). While this improves the average case considerably and still gives a linear time algorithm, the hidden constant for the worst case is large. In this section we follow the idea to a certain extent without loosing a good worst-case bound.

As already mentioned in \prettyref{sec:impQMS}, Reinhardt's merging procedure \cite{Reinhardt92} works also with less than one fifth of the whole array as temporary space if we do not require to merge sequences of equal length.
Thus, we can allow the pivot to be even further off the median~-- 
with the cost of making the Mergesort part more expensive due to imbalanced merging. 
For $\theta \geq 1$ we describe a variant $\uMQMS{\theta}$ of MoMQuickMergesort using only $ n/\theta$ elements for sampling the pivot. Before we analyze this variant, let us look at the costs of Mergesort with imbalanced merging:
in order to apply Reinhardt's merging algorithm, we need that one part is at most twice the length of the temporary space. We always apply linear merging (no binary insertion) meaning that merging two sequences of combined length $n$ costs at most $n-1$ comparisons. Thus, we get the following estimate for the worst case number of comparisons $ T_{\MS,b}(n,m)$ of Mergesort where $n$ is the number of elements to sort and $m$ is the temporary space (= ``buffer''):
\begin{align*}
 T_{\MS,b}(n,m) &\leq \!\begin{cases}
T_{\MS}(n) & \text{if } n \leq 4m\\
n +  T_{\MS,b}(n - 2m,m) + T_{\MS}(2m)&\text{otherwise}.
\end{cases}
\end{align*}
If $n > 2m$ (otherwise, there is nothing to do), this means
\begin{align}
 \nonumber T_{\MS,b}(n,m) 	&\leq \left(\ceil{\frac{n}{2m}} - 2\right) \cdot T(2m)  + T\left(n - \left(\ceil{\frac{n}{2m}} - 2\right)\cdot 2m\right)
+ \!\!\!\sum_{i=0}^{\ceil{\frac{n}{2m}} - 3} (n - 2im)\\ 
 \begin{split}
	&= \left(\ceil{\frac{n}{2m}} - 2\right) \cdot T(2m) + T\left(n - \left(\ceil{\frac{n}{2m}} - 2\right)\cdot 2m\right)\\ 
&\qquad\;  + n \cdot \left(\ceil{\frac{n}{2m}} - 2\right)  - m \cdot  \left(\ceil{\frac{n}{2m}} - 2\right) \left(\ceil{\frac{n}{2m}} - 3\right).\label{eq:TMSb}
 \end{split}
\end{align}
For a moment let us assume that $\frac{n}{2m} = \ell \in \Z$ (with $\ell \geq 1 $). In this case we have 
\begin{align*}
 T_{\MS,b}\left(n,\frac{n}{2\ell}\right) &\leq (\ell- 2) \cdot T\left(\frac{n}{\ell}\right) + T\Bigl(\frac{2n}{\ell}\Bigr) + n \cdot (\ell - 2) - \tfrac{n}{2\ell} \cdot  (\ell - 2) (\ell - 3)\\
		&\leq (\ell- 2) \cdot \tfrac{n}{\ell} \cdot \left(\log\left(\tfrac{n}{\ell}\right) - \kappa \right) 
		+ \tfrac{2n}{\ell} \cdot \left(\log \left(\tfrac{2n}{\ell}\right) - \kappa \right) + n \cdot (\ell - 2) \cdot\left(\tfrac{1}{2} + \tfrac{3}{2\ell} \right)\\
		&\leq n \log n 		+ n \cdot f(\ell) 
\end{align*}
 for $n$ large enough where $f: \R_{>0} \to \R$ is defined by
\begin{align*}
f(\ell) = \begin{cases}
- \kappa   - \log \ell + \ell/2  +1/2 - 1/\ell&\text{for } \ell \geq 2\\
 -\kappa&\text{otherwise} 
\end{cases}
\end{align*}
and $n\log n - \kappa n$ is a bound for the number of comparisons of Mergesort for $n$ large enough ($\kappa \approx 0.91$ by \cite{weisstein}).
Now, for arbitrary $m$ we can use $f$ as an approximation, which turns out to be quite precise: 
\begin{lemma}\label{lem:error_bound}
	Let $f$ be defined as above and write $\ceil{\frac{n}{2m}} = \frac{n}{2m} + \xi$. Then for $m$ and $n$ large enough we have
	 \begin{align*}
	 T_{\MS,b}(n&,m)  \leq  
	 n \log n + n \cdot f\left(\frac{n}{2m}\right) + m\cdot \epsilon(\xi)
	 \end{align*}
	  where $\epsilon(\xi) = \max \oneset{0, 5\xi - 4 + (4 - 2\xi)  \log (2 - \xi) - 
            \xi^2} \\ \leq  0.015 =:\epsilon$.
\end{lemma}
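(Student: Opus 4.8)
The plan is to feed the Mergesort estimate $T_{\MS}(x)\le x\log x-\kappa x$ into the already unrolled bound \prettyref{eq:TMSb} and to track the deviation of the result from $n\log n+n\,f(n/(2m))$. Throughout I write $\ell=n/(2m)$ and $L=\ceil{\ell}=\ell+\xi$ with $\xi\in[0,1)$. If $\ell<2$ (i.e.\ $n\le 4m$) the recurrence is already in its base case, $T_{\MS,b}(n,m)=T_{\MS}(n)\le n\log n-\kappa n=n\log n+n\,f(\ell)$, so there is nothing to prove; hence I assume $\ell\ge 2$, where $f(\ell)=-\kappa-\log\ell+\ell/2+1/2-1/\ell$. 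The first thing to pin down is that the remainder term in \prettyref{eq:TMSb} equals $T_{\MS}\bigl(n-(L-2)\,2m\bigr)=T_{\MS}\bigl(2m(2-\xi)\bigr)$, since $n=2m\ell=2m(L-\xi)$; because $2-\xi\in(1,2]$ this argument lies in $(2m,4m]$, so for $m$ large enough the Mergesort bound applies to it as well as to $T_{\MS}(2m)$.

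Next I substitute $T_{\MS}(2m)\le 2m(\log(2m)-\kappa)$ and $T_{\MS}(2m(2-\xi))\le 2m(2-\xi)(\log(2m)+\log(2-\xi)-\kappa)$ into \prettyref{eq:TMSb} and collect coefficients. The coefficients of $2m\log(2m)$ add up to $(L-2)+(2-\xi)=\ell$, which yields $2m\ell\log(2m)=n\log(2m)=n\log n-n\log\ell$; the coefficients of $-\kappa\,2m$ also add up to $\ell$, giving $-\kappa n$; the extra logarithm contributes the term $2m(2-\xi)\log(2-\xi)$; and the two polynomial terms $n(L-2)-m(L-2)(L-3)$ simplify, after writing $L=\ell+\xi$, to $m(\ell^2+\ell-\xi^2+5\xi-6)$. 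The only point needing care here is this purely mechanical bookkeeping of coefficients.

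It then remains to subtract $n\log n+n\,f(\ell)$. Writing $m=n/(2\ell)$ turns the polynomial term into $m\ell^2+m\ell+m(-\xi^2+5\xi-6)=\tfrac{n\ell}{2}+\tfrac{n}{2}+m(-\xi^2+5\xi-6)$: the $\tfrac{n\ell}{2}$ and $\tfrac{n}{2}$ match the corresponding parts of $n\,f(\ell)$ and cancel, the terms $-n\log\ell$ and $-\kappa n$ cancel as well, and the leftover $-n/\ell=-2m$ from $f$ combines with the constant $-6$. What survives is exactly
\begin{align*}
T_{\MS,b}(n,m)-\bigl(n\log n+n\,f(\ell)\bigr)\;\le\; m\bigl[\,5\xi-4+(4-2\xi)\log(2-\xi)-\xi^2\,\bigr].
\end{align*}
Since the left-hand side is a valid upper bound irrespective of sign, replacing the bracket by $\max\oneset{0}{5\xi-4+(4-2\xi)\log(2-\xi)-\xi^2}=\epsilon(\xi)$ only weakens the inequality, which is the first assertion of the lemma.

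The last step, and the only one involving a genuine estimate rather than algebra, is to bound $g(\xi):=5\xi-4+(4-2\xi)\log(2-\xi)-\xi^2$ on $[0,1)$. I would check $g(0)=g(1)=0$ (using $\log 2=1$ and $\log 1=0$), compute $g'(\xi)=5-2\log(2-\xi)-\tfrac{2}{\ln 2}-2\xi$, and solve $g'(\xi)=0$, which has a unique root $\xi^\ast\approx 0.25$ in the interval; evaluating there gives $g(\xi^\ast)\approx 0.013<0.015$, whence $\epsilon(\xi)\le 0.015$ as claimed. I expect this single-variable maximization to be the main (and essentially only) obstacle, the preceding two paragraphs amounting to a clean, if slightly lengthy, substitution that collapses almost miraculously once $m=n/(2\ell)$ is used.
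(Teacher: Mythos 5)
Your proposal is correct and follows essentially the same route as the paper's proof: substitute the bound $T_{\MS}(x)\le x\log x-\kappa x$ into the unrolled recurrence \prettyref{eq:TMSb}, collect coefficients (the paper organizes this as two separately simplified terms, \prettyref{eq:part1} and \prettyref{eq:part2}, but the algebra is identical and your totals $m(\ell^2+\ell-\xi^2+5\xi-6)$ and $2m(2-\xi)\log(2-\xi)$ agree with the paper's), and subtract $n\log n+n\,f(\ell)$ to leave $m\bigl[(4-2\xi)\log(2-\xi)+5\xi-\xi^2-4\bigr]$. The only addition is that you spell out the single-variable maximization of $\epsilon(\xi)$ over $[0,1)$, which the paper leaves implicit in the statement of the lemma; your critical point $\xi^\ast\approx 0.25$ and value $\approx 0.013<0.015$ check out.
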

\begin{proof}
Let $m$ be large enough such that $T_{\MS}(m) \leq m \log m - \kappa m$. If for $n \leq 2m$ we have $T_{\MS,b}\left(n,m\right) = T_{\MS}\left(n\right)$ and so the lemma holds. Now let $n > 2m$.	
By \prettyref{eq:TMSb} we obtain 
\begin{align}
 T_{\MS,b}\left(n,m\right) \leq
&\left(\frac{n}{2m} + \xi -2\right)T_{\MS}(2m)+ T_{\MS}((2-\xi) 2m) \label{eq:part1}\\
&\label{eq:part2}\quad + n \cdot \left(\frac{n}{2m} + \xi - 2\right)  -  m \cdot  \left(\frac{n}{2m}+ \xi - 2\right) \left(\frac{n}{2m} + \xi - 3\right).
\end{align}
We examine the two terms \prettyref{eq:part1} and \prettyref{eq:part2} separately using $T_{\MS}(n) \leq n \log n - \kappa n$:	
\begin{align*}\allowdisplaybreaks
\prettyref{eq:part1} &= \left(\frac{n}{2m} + \xi -2\right)T_{\MS}(2m)+ T_{\MS}((2-\xi) 2m) \\ 
&\leq \left(\frac{n}{2m} + \xi -2\right)2m \left(\log 2m - \kappa \right ) +(2-\xi) 2m \bigl(\log ((2-\xi) 2m) - \kappa \bigr) \\ 
&= (n \log 2m - \kappa n)+   \left(\xi -2\right)2m(\log 2m - \kappa )  + (2-\xi) 2m \bigl(\log ((2-\xi) 2m) - \kappa \bigr) \\ 
&= (n \log n - \kappa n) - n \log \left(\frac{n}{2m}\right)  + \left(2-\xi\right)2m  \cdot \bigl( (\log ((2-\xi) 2m) - \kappa ) - (\log 2m - \kappa ) \bigr)\\
&= (n \log n - \kappa n) - n \log \left(\frac{n}{2m}\right)  + \left(2-\xi\right)2m \log (2-\xi)   
\intertext{and}
\prettyref{eq:part2}&=  n \cdot \left(\frac{n}{2m} + \xi - 2\right)  - m \cdot  \left(\frac{n}{2m}+ \xi - 2\right) \left(\frac{n}{2m} + \xi - 3\right)\\ 
&=  n \cdot \left(\frac{n}{2m}  - 2\right) - m \cdot  \left(\frac{n}{2m} - 2\right) \left(\frac{n}{2m}  - 3\right)  + n\xi - m \left(\xi \left(\frac{n}{2m}  - 3\right) + \left(\frac{n}{2m} - 2\right) \xi + \xi ^2\right)\\
&=  n \cdot \left(\frac{n}{2m}  - 2 - \frac{m}{n} \cdot   \left(\left(\frac{n}{2m}\right)^2 - 5\frac{n}{2m} + 6\right)\right)  + m \left(5 \xi - \xi ^2\right)\\	
&=  n \cdot \left(\frac{n}{2\cdot 2m}  + \frac{1}{2} - 3 \cdot\frac{2m}{n}\right)  + m \left(5 \xi - \xi ^2\right).	
\end{align*}
Thus, 	
\begin{align*}
T_{\MS,b}(n,m) -  \left(n \log n + n \cdot f\left(\frac{n}{2m}\right)\right) & \leq \prettyref{eq:part1} + \prettyref{eq:part2}  -  \left(n \log n + n \cdot f\left(\frac{n}{2m}\right) \right) \\ &\leq \left(2-\xi\right)2m \log (2-\xi) + m \left(5 \xi - \xi ^2\right) - 4m.
\end{align*}
	This completes the proof of \prettyref{lem:error_bound}.
\end{proof}

For selecting the pivot in QuickMergesort, we apply the procedure of \prettyref{sec:impQMS} to $n/\theta$ elements (for some parameter $\theta \in \R$, $\theta\geq 1$): we select $n/\theta$ elements from the array, group them into groups of fifteen elements, compute the pseudomedian of each group, and take the median of those pseudomedians as pivot.  We call this algorithm MoMQuickMergesort with \emph{undersampling factor} $\theta$ (\uMQMS{\theta}). Note that $\uMQMS{1} = \MQMS$. For its worst case number of comparisons we have 
\begin{align*}
 T_{\uMQMS{\theta}}(n) &\leq \max_{\frac{1}{5 \theta }\leq \alpha \leq \frac{1}{2}} T_{\uMQMS{\theta}}\left(\alpha n\right) + T_{\MS,b}\left(n \left (1 - \alpha\right),\alpha n\right)  \\&\qquad +  \frac{22}{15\theta}n  + \frac{20}{15\theta}n +\left(1 - \frac{1}{15\theta}\right)n + \Oh(n^{0.8}) 
\end{align*}
where the $\frac{22}{15\theta}n$ is for finding the pseudomedians of fifteen, the $\frac{20}{15\theta}n + \Oh(n^{0.8})\vphantom{k^{k^k}}$ is for the median-of-medians algorithm called on $\frac{n}{15\theta}$ elements and $\left(1 - \frac{1}{15\theta}\right)n$ is for partitioning the remaining elements. 
Now we plug in the bound of \prettyref{lem:error_bound} for $ T_{\MS,b}(n,m) $ with $\ell = \frac{1-\alpha}{2\alpha}$ and apply \prettyref{lem:recurrence}: 
\begin{align*}
T_{\uMQMS{\theta}}(n) 
&\leq  \max_{\frac{1}{5 \theta }\leq \alpha \leq \frac{1}{2}} T_{\uMQMS{\theta}}\left(\alpha n\right) +  (1-\alpha) n\log ((1-\alpha) n)\\&\qquad + (1-\alpha) n  \left( f\left(\frac{1-\alpha}{2\alpha}\right) + \epsilon\right) + n \cdot \left( 1 + \frac{41}{15\theta}  \right) + \Oh(n^{0.8})\\
&\leq   n\log n +  n \cdot \max_{\frac{1}{5 \theta }\leq \alpha \leq \frac{1}{2}} g(\alpha, \theta) + \Oh(n^{0.8})
\end{align*}
for 
\begin{align*}
g(\alpha, \theta) &= \frac{\alpha\log (\alpha)}{1-\alpha} +  \log \left( 1 - \alpha\right)  +   f\left(\frac{1-\alpha}{2\alpha}\right)  + \frac{1}{1-\alpha} \cdot \left( 1 + \frac{41}{15\theta}  \right)+ \epsilon.
\end{align*}

In order to find a good undersampling factor, we wish to find a value for $\theta$ minimizing $\max_{\frac{1}{5 \theta }\leq \alpha \leq \frac{1}{2}} g(\alpha, \theta)$. While  we do not have a formal proof, intuitively the maximum should be either reached for $\alpha = 1/2$ (if $\theta$ is small) or for $\alpha = 1/(5\theta)$ (if $\theta$ is large)~-- see \prettyref{fig:undersampling-fixed-theta} for a special value of $\theta$.  Moreover, notice that we are dealing with an upper bound on $T_{\uMQMS{\theta}}(n)$  only (with a small error due to \prettyref{lem:error_bound} and the bound $n\log n - \kappa n$ for $T_{\MS}(n)$), so even if we could find the $\theta$ which minimizes $\max_{\frac{1}{5 \theta }\leq \alpha \leq \frac{1}{2}} g(\alpha, \theta)$, this $\theta$ might not be optimal.

We proceed as follows: first, we compute the point $\theta_{\mathrm{opt}}$ where the two curves in \prettyref{fig:undersampling-theory} intersect. For this particular value of $\theta$, we then show that indeed  $\max_{\frac{1}{5 \theta }\leq \alpha \leq \frac{1}{2}} g(\alpha, \theta) = g(1/2,\theta)$. Since $\theta  \mapsto g(1/2, \theta)$ is monotonically decreasing (this is obvious) and $\theta \mapsto g(1/(5\theta), \theta)$ is monotonically increasing for $\theta \geq 2.13$ (verified numerically), this together shows that $\max_{\frac{1}{5 \theta }\leq \alpha \leq \frac{1}{2}} g(\alpha, \theta)$ is minimized at the intersection point.

\begin{figure}[htb]
	\vspace{-6mm}
	\begin{center}
		\scalebox{0.82}{\hspace{-1.5mm}\input{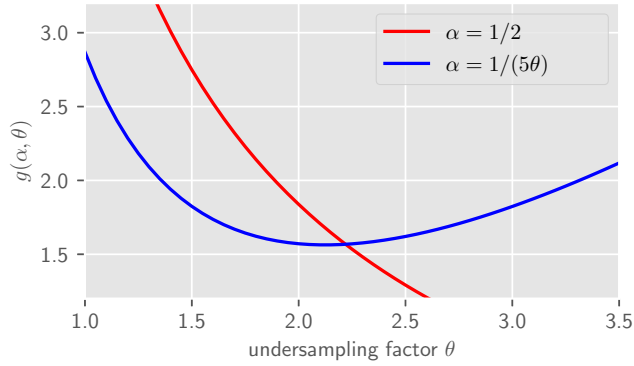}}
	\end{center}
	\vspace{-3mm}
	\caption{\small $\theta  \mapsto g(\alpha, \theta)$ for $\alpha= 1/2$ and $\alpha=1/(5\theta)$.}\label{fig:undersampling-theory}
\end{figure}

We compute the intersection point numerically as $\theta_{\mathrm{opt}} \approx 2.219695$. For $\theta_{\mathrm{opt}}$ we verify (using Wolfram$|$Alpha \cite{wolframalpha}), that the maximum $\max_{\frac{1}{5 \theta }\leq \alpha \leq \frac{1}{2}} g(\alpha, \theta)$ is attained at $\alpha = 1/(5\theta_{\mathrm{opt}})$ and that $g(1/(5\theta_{\mathrm{opt}}), \theta_{\mathrm{opt}}) \approx 1.56780$ and $g(1/2, \theta_{\mathrm{opt}}) \approx 1.56780$. Thus, we have established the optimality of $\theta_{\mathrm{opt}}$ even though we have not computed $g(\alpha, \theta)$ for $\alpha \not\in \oneset{1/5, 1/(5\theta)}$ and $\theta \neq \theta_{\mathrm{opt}}$. (In the mathoverflow question \cite{mathoverflowP18}, this value is verified analytically~-- notice that there $g(\alpha, \theta)$ is slightly different giving a different $\theta$.)

For implementation reasons we want $\theta$ to be a multiple of $1/30$. 
Therefore, we propose $\theta = 11/5$~-- a choice which is only slightly smaller than the optimal value and confirmed experimentally (\prettyref{fig:undersampling}).
Again for this fixed $\theta$, we verify that indeed the maximum is at $\alpha = 1/2$ and that $g(1/(5\theta), \theta) \approx 1.57$ and  $g(1/2, \theta) \approx 1.59$, see \prettyref{fig:undersampling-fixed-theta}. Thus, up to the small difference $0.02$, we know that $\theta=11/5$ is optimal.

\begin{figure}[htb]
	\vspace{-6mm}
	\begin{center}
		\scalebox{0.82}{\hspace{1.5mm}\input{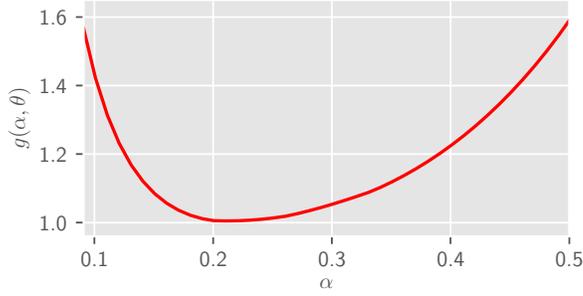}}
	\end{center}
	\vspace{-3mm}
	\caption{\small $\alpha  \mapsto g(\alpha, \theta)$ for $\theta= 11/5$ with $\alpha \in [1/(5\theta), 1/2]$ reaches its maximum $\approx 1.59$ for $\alpha = 1/2$. }\label{fig:undersampling-fixed-theta}
\end{figure}

For this fixed value of $\theta = 11/5$ we have thus computed  $\max_{\frac{1}{5 \theta }\leq \alpha \leq \frac{1}{2}} g(\alpha, \theta) \leq 1.59$, which in turn gives us a bound on  $T_{\uMQMS{11/5}}(n)$.

\begin{theorem}\label{thm:wc_undersampling}
	MoMQuickMergesort with undersampling factor $\theta = 11/5$ ($\uMQMS{11/5}$) runs in $\Oh(n \log n)$ time and performs at most $n \log n + 1.59n + \Oh(n^{0.8})$ comparisons.
\end{theorem}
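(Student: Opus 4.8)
The plan is to specialize the general recurrence for $T_{\uMQMS{\theta}}(n)$ derived above to $\theta = 11/5$ and then to control the resulting linear coefficient. First I would pin down the recurrence: on an input of size $n$ the algorithm spends $\frac{22}{15\theta}n$ comparisons on the pseudomedians of fifteen, $\frac{20}{15\theta}n + \Oh(n^{0.8})$ on the median-of-medians call over the $\frac{n}{15\theta}$ pseudomedians (invoking \prettyref{lem:mom}, whose exponent $\zeta$ is below $0.8$), and $(1 - \frac{1}{15\theta})n$ on partitioning the remaining elements; it then sorts the larger block of size $(1-\alpha)n$ with buffered Mergesort and recurses on the smaller block of size $\alpha n$, where the pivot guarantee of one fifth of the sample forces $\alpha \in [\frac{1}{5\theta}, \frac12]$. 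Taking the worst case over $\alpha$ reproduces the displayed recurrence, and since only the smaller side recurses into QuickMergesort while the larger is handled non-recursively, the recursion follows a single path.

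Next I would eliminate the buffered-Mergesort term. Applying \prettyref{lem:error_bound} with $n \mapsto (1-\alpha)n$ and buffer $m = \alpha n$ (so that $\frac{n}{2m} = \ell = \frac{1-\alpha}{2\alpha}$) bounds it by $(1-\alpha)n\log((1-\alpha)n) + (1-\alpha)n\,f(\frac{1-\alpha}{2\alpha}) + \alpha n\,\epsilon$; since $\alpha \le 1-\alpha$ throughout the admissible interval, the error $\alpha n\,\epsilon$ is absorbed into $(1-\alpha)n\,\epsilon$, which explains why $\epsilon$ enters $g$ carrying the factor $(1-\alpha)$.

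Then comes the closed form. I would prove $T_{\uMQMS{\theta}}(n) \le n\log n + c\,n + \Oh(n^{0.8})$ with $c = \max_{\frac{1}{5\theta}\le\alpha\le\frac12} g(\alpha,\theta)$ by induction on $n$. Substituting the inductive hypothesis for $T_{\uMQMS{\theta}}(\alpha n)$ and using the identity $\alpha n\log(\alpha n) + (1-\alpha)n\log((1-\alpha)n) = n\log n + \alpha n\log\alpha + (1-\alpha)n\log(1-\alpha)$, the inductive step closes precisely when $c\alpha + \alpha\log\alpha + (1-\alpha)\log(1-\alpha) + (1-\alpha)(f(\frac{1-\alpha}{2\alpha}) + \epsilon) + (1 + \frac{41}{15\theta}) \le c$ for every admissible $\alpha$; dividing by $(1-\alpha)$ turns this into $g(\alpha,\theta)\le c$, which holds by the definition of $c$. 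This induction is the clean way to justify pulling the $\max$ outside the recurrence, since a direct appeal to \prettyref{lem:recurrence} only resolves a single fixed $\alpha$.

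Finally, for $\theta = 11/5$ I would bound $c$, and this last step is where the real work lies: one must show $\max_{\frac{1}{5\theta}\le\alpha\le\frac12} g(\alpha,11/5) \le 1.59$, and in fact that the maximum is attained at the right endpoint $\alpha = \frac12$. Because $g$ contains the piecewise term $f(\frac{1-\alpha}{2\alpha})$ together with several logarithms, ruling out an interior maximum is the main obstacle; I would handle it either by examining the sign of $\partial_\alpha g$ on $[\frac{1}{5\theta}, \frac12]$ or, following the paper, by plotting $\alpha \mapsto g(\alpha,11/5)$ (\prettyref{fig:undersampling-fixed-theta}) and comparing the two candidate endpoint values $g(\frac{1}{5\theta},11/5) \approx 1.57$ and $g(\frac12,11/5) \approx 1.59$ numerically. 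With $c \le 1.59$ established, the induction above delivers the claimed $n\log n + 1.59n + \Oh(n^{0.8})$ comparison bound, and the $\Oh(n\log n)$ running time follows because each comparison is accompanied by only a constant number of additional operations.
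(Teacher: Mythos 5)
Your proposal is correct and follows essentially the same route as the paper: the same recurrence, the same substitution of \prettyref{lem:error_bound} with $\ell=\frac{1-\alpha}{2\alpha}$, and the same numerical verification that $\max_{\frac{1}{5\theta}\le\alpha\le\frac12} g(\alpha,11/5)\le 1.59$ with the maximum at $\alpha=\frac12$. The one point where you diverge --- replacing the direct appeal to \prettyref{lem:recurrence} by an explicit induction that closes exactly when $g(\alpha,\theta)\le c$ for all admissible $\alpha$ --- is a legitimate tightening, since \prettyref{lem:recurrence} is stated for a fixed split ratio and does not by itself license taking the maximum over $\alpha$ inside the recurrence; your absorption of the $\alpha n\,\epsilon$ error into $(1-\alpha)n\,\epsilon$ via $\alpha\le 1-\alpha$ is likewise the right justification for how $\epsilon$ enters $g$.
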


\subsection{Heuristic estimate of the average case}\label{sec:average}

It is hard to calculate an exact average case since at none but the first stage during the execution of the algorithm we are dealing with random inputs. We still estimate the average case by assuming that all intermediate arrays are random and applying some more heuristic arguments. 

\smallskip
\paragraph{Average of the median-of-medians algorithm.} On average we can expect that the pivot returned from the median-of-medians procedure is very close to an actual median, which gives us an easy recurrence showing that $T_{\mathrm{av, MoM}}(n) \approx 40/7n$. 
However, we have to take adaptive pivot selection into account. The first pivot is the $n/2 \pm o(n)$-th element with very high probability. Thus, the recursive call is on $n/2 + o(n)$ elements with $k \in o(n)$ (or $k=n/2$~-- by symmetry we assume the first case). Due to adaptive pivot selection, the array will be also split in a left part of size $o(n)$ (with the element we are looking for in it~-- this is guaranteed even in the worst case) and a larger right part. This is because an $o(n)$ order element of the $n/15$ pseudomedians of fifteen is also an $o(n)$ order elements of the whole array. Thus, all successive recursive calls will be made on arrays of size $o(n)$. We denote the average number of comparisons of the median-of-median algorithm recursing on an array of size $o(n)$ as $T_{\mathrm{av, MoM}}^{\mathrm{noRec}}(n)$.

We also have to take the recursive calls for pivot selection  into account.
The first pivot is the median of the sample; thus, the same reasoning as for $T_{\mathrm{av, MoM}}(n)$ applies. The second pivot is an element of order $o(n)$ out of $n/18 + o(n)$ elements~-- so we are in the situation of $T_{\mathrm{av, MoM}}^{\mathrm{noRec}}(n)$. Thus, we get
\begin{align*}
T_{\mathrm{av, MoM}}(n) &= T_{\mathrm{av, MoM}}^{\mathrm{noRec}}\left(\frac{n}{2}\right) + T_{\mathrm{av, MoM}}\left(\frac{n}{9}\right) + \frac{20n}{9}
\intertext{and}
T_{\mathrm{av, MoM}}^{\mathrm{noRec}}(n) &=  T_{\mathrm{av, MoM}}^{\mathrm{noRec}}(\tfrac{n}{9}) +\tfrac{20n}{9} + o(n).
\end{align*}
Hence, by \prettyref{lem:recurrence}, we obtain $T_{\mathrm{av, MoM}}^{\mathrm{noRec}}(n) = \tfrac{20n}{9}\cdot\frac{9}{8} + o(n) = \tfrac{5n}{2}+ o(n)$ and \begin{align*}
T_{\mathrm{av, MoM}}(n) &= T_{\mathrm{av, MoM}}(n/9) + \frac{20n}{9} + \frac{5n}{4}  + o(n)\\ &= \frac{125}{32}n + o(n) \leq 4n + o(n).
\end{align*}

\paragraph{Average of MoMQuickMergesort.}
As for the median-of-medians algorithm, we can expect that the pivot in MoMQuickMergesort is always very close to the median. Using the bound for the adaptive version of the median-of-medians algorithm, we obtain 
\begin{align*}\allowdisplaybreaks
	T_{\mathrm{av},\uMQMS{\theta}}(n) &= T_{\mathrm{av},\uMQMS{\theta}}(n/2) + \frac{n}{2}\log(n/2) - \frac{1.24 n}{2}+ \frac{22}{15\theta}n + \frac{4}{15\theta}n + \frac{15\theta - 1}{15\theta}n + o(n)\\
	&\leq n\log n + n \cdot\left(- 1.24 +  \frac{10}{3\theta} \right)+ o(n).
\end{align*}
by \prettyref{lem:recurrence} (here the $4n/15\theta$ is for the average case of the median-of-medians algorithm, the other terms as before). This yields
\[T_{\mathrm{av}, \MQMS}(n) \leq n\log n + 2.094 n + o(n)\] (for $\theta=1$). For our proposed $\theta=\frac{11}{5}=2.2$ we have
\[T_{\mathrm{av},\uMQMS{11/5}}(n) \leq n\log n + 0.275 n + o(n).\]

\subsection{Hybrid algorithms}\label{sec:hybrid}

In order to achieve an even better average case, we can apply a trick similar to {Introsort} \cite{Mus97}. Be aware, however, that this deteriorates the worst case slightly. 
We fix some small $\delta >0$. The algorithms starts by executing QuickMergesort with median of three pivot selection. Whenever the pivot is
contained in the interval $\left[\delta n, (1-\delta)n \right]$, the next pivot is selected again as median of three, otherwise according to \prettyref{sec:small_samples} (as median of pseudomedians of $n/\theta$ elements)~-- for the following pivots it switches back to median of 3. When choosing  $\delta$ not too small, the worst case number of comparisons will be only approximately $2n$ more than of MoMQuickMergesort with undersampling (because in the worst case before every partitioning step according to MoMQuickMergesort with undersampling, there will be one partitioning step with median-of-3 using $n$ comparisons), while the average is almost as QuickMergesort with median-of-3. We use $\delta = 1/16$.
We call this algorithm hybrid QuickMergesort (HQMS). 

Another possibility for a hybrid algorithm is to use MoMQuickMergesort (with undersampling) instead of Heapsort as a worst-case stopper for Introsort. We test both variants in our experiments.

\subsection{Summary of algorithms}
For the reader's convenience we provide a short summary of the different versions of MoMQuickMergesort and the results we obtained in \prettyref{tab:summary}.

\begin{table}[hbt]%
	\begin{center}
		{\small\renewcommand{\arraystretch}{1.4}	\begin{tabular}{|l|p{7.6cm}|p{3.7cm}|}
				\hline
				Acronym &  Algorithm  &  Results \\  \cline{2-3}
				\hline
		$\bMQMS$ 		&  basic MoMQuickMergesort 		 	
						&\prettyref{thm:wc_basic}:  $ \kappa_{\mathrm{wc}} \leq 13.8$  \\
		$\MQMS$	 		& 	MoMQuickMergesort\newline (uses Reinhardt's merging with balanced merges) 			 
						&\prettyref{thm:wc}: $ \kappa_{\mathrm{wc}} \leq 4.57$, \newline\prettyref{sec:average}:  $ \kappa_{\mathrm{ac}} \approx 2.094$\\
		$\uMQMS{11/5}$ 	&	MoMQuickMergesort with undersampling factor $11/5$\newline (uses Reinhardt's merging with imbalanced merges)
		 				&\prettyref{thm:wc_undersampling}: $ \kappa_{\mathrm{wc}} \leq 1.59$, \newline \prettyref{sec:average}: $ \kappa_{\mathrm{ac}} \approx 0.275$\\
		HQMS 			& 	hybrid QuickMergesort	(combines median-of-3 QuickMergesort and $\uMQMS{11/5}$) 
						& \prettyref{sec:hybrid}:  $ \kappa_{\mathrm{wc}} \leq 3.58$ for $\delta$ large enough, \newline $ \kappa_{\mathrm{ac}}$ smaller
			 \\
				\hline
			\end{tabular}
		}
	\end{center}	
	\vspace{-3mm}
	\caption{\small Overview over the algorithms in this paper. For the worst case number of comparisons $n\log n - \kappa_{\mathrm{wc}}n + \Oh(n^{0.8})$ and average case  of roughly $n\log n - \kappa_{\mathrm{ac}}n$ the results on $ \kappa_{\mathrm{wc}}$ and $\kappa_{\mathrm{ac}}$ are shown. The average cases are only heuristic estimates.}\label{tab:summary}
\end{table}%

\section{Experiments}\label{sec:experiments}

\paragraph{Experimental setup.} We ran thorough experiments with implementations in C++ with different kinds of input
permutations. The experiments are run on an Intel Core i5-2500K CPU (3.30GHz, 4 cores,
32KB L1 instruction and data cache, 256KB L2 cache per core and 6MB L3
shared cache) with 16GB RAM and operating system Ubuntu Linux 64bit
version 14.04.4.  We used GNU's \texttt{g++} (4.8.4); optimized with
flags \texttt{-O3 -march=native}.
For time measurements, we used \texttt{std\dd chrono\dd high\_resolution\_clock}, for generating random inputs, the Mersenne
Twister pseudo-random generator \texttt{std:$\!$:mt19937}. All time
measurements were repeated with the same 100 deterministically chosen
seeds~-- the displayed numbers are the averages of these 100 runs. 
Moreover, for each time measurement, at least 128MB of data were sorted~-- if the array size is smaller, then for this time
measurement several arrays have been sorted and the total
elapsed time measured.
If not specified explicitly, all experiments were conducted with 32-bit integers.

\smallskip
\paragraph{Implementation details.}
The code of our implementation of MoMQuickMergesort as well as the other algorithms and our running time experiments is available at \url{https://github.com/weissan/QuickXsort}.
 In our implementation of MoMQuickMergesort, we use the merging procedure from \cite{ElmasryKS12}, which avoids branch mispredictions. We use the partitioner from the libstdc++ implementation of \stdsort. For the running time experiments, base cases up to 42 elements are sorted with Insertionsort. For the comparison measurements Mergesort is used down to size one arrays.

\smallskip
\paragraph{Simulation of a worst case.}
In order to experimentally confirm our worst case bounds for MoMQuickMergesort, we simulate a worst case.  Be aware that it is not even clear whether in reality there are input permutations where the bounds for the worst case of \prettyref{sec:momqms} are tight since when selecting pivots the array is already pre-sorted in a particular way (which is hard to understand for a thorough analysis). Actually in \cite{ChenD15} it is conjectured that similar bounds for different variants of the median-of-medians algorithm are not tight. Therefore, we cannot test the worst-case by designing particularly bad inputs. Nevertheless, we can simulate a worst-case scenario where every pivot is chosen the worst way possible (according to the theoretical analysis). More precisely, the simulation of the worst case comprises the following aspects:
\begin{itemize}
	\item 
For computing the $k$-th element of a small array (up to 30 elements) we additionally sort it with Heapsort. This is because our implementation uses Introselect (\stdnth) for arrays of size up to 30.
	\item 
When measuring comparisons, we perform a random shuffle before every call to Mergesort. As the average case of Mergesort is close to its worst case (up to approximately $0.34n$), this gives a fairly well approximation of the worst case. For measuring time we apply some simplified shuffling method, which shuffles only few positions in the array. 
	\item 
 In the median-of-medians algorithm, we do not use the pivot selected by recursive calls, but use \stdnth to find the worst pivot the recursive procedure could possibly select. We do not count comparisons incurred by \stdnth. This is the main contribution to the worst case.
	\item 
As pivot for QuickMergesort (the basic and improved variant) we always use the real median (this is actually the worst possibility as the recursive call of QuickMergesort is guaranteed to be on the smaller half and Mergesort is not slower than QuickMergesort). In the version with undersampling we use the most extreme pivot (since this is worse than the median).
\item We also make 100 measurements for each data point. When counting comparisons, we take the maximum over all runs instead of the mean. However, this makes only a negligible difference (as the small standard deviation in \prettyref{tab:qmsComp} suggests).  When measuring running times we still take the mean since the maximum reflects only the large standard deviation of Quickselect (\stdnth), which we use to find bad pivots.
\end{itemize}

The simulated worst cases are always drawn as dashed lines in the plots (except in \prettyref{fig:undersampling}).

\smallskip
\paragraph{Different undersampling factors.}
In \prettyref{fig:undersampling}, we compare the (simulated) worst-case number of comparisons for different undersampling factors $\theta$. The picture resembles the one in \prettyref{fig:undersampling-theory}. However, all numbers are around $0.4$ smaller than in \prettyref{fig:undersampling-theory} because we used the average case of Mergesort to simulate its worst case. Also, depending on the array size $n$, the point where the two curves for $\alpha = 1/2$ and $\alpha = 1/(5\theta)$ meet differs ($\alpha$ as in \prettyref{sec:small_samples}). 
Still the minimum is always achieved between 2.1 and 2.3 (recall that we have to take the maximum of the two curves for the same $n$)~-- confirming the calculations in \prettyref{sec:small_samples} and suggesting $\theta = 2.2$ as a good choice for further experiments.

\begin{figure}[htb]
	\vspace{-6mm}
	\begin{center}
		\scalebox{0.82}{\hspace{-1.5mm}\input{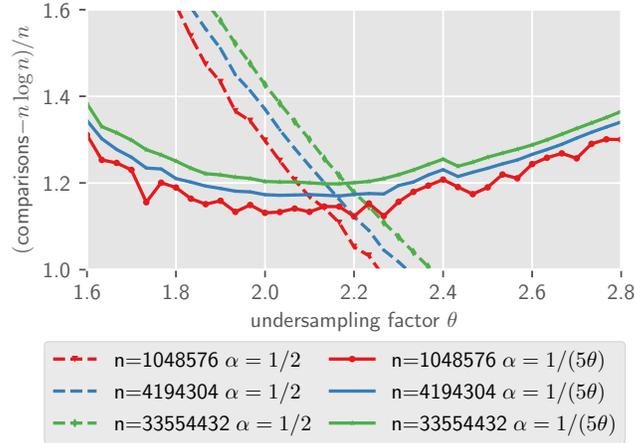}}
	\end{center}
	\vspace{12mm}
	\caption{\small Coefficient of the linear term of the number of comparisons in the simulated worst case for different undersampling factors.}\label{fig:undersampling}
\end{figure}

\begin{figure}[thb]
	\vspace{-10mm}
	\centering
	\scalebox{0.82}{\hspace{-2mm}\input{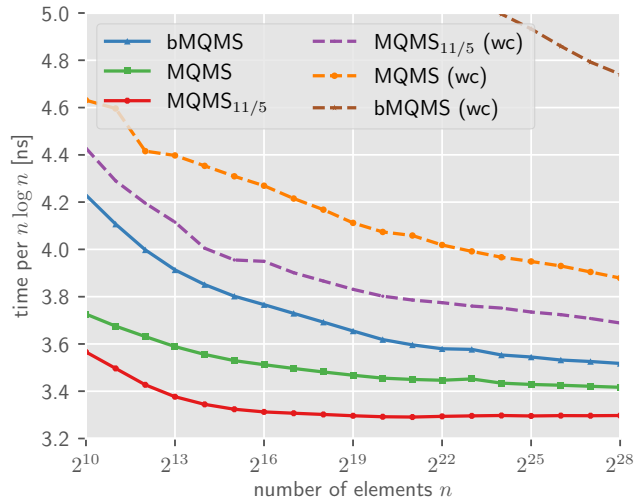}}
		\vspace{-2mm}
	\caption{\small Running times divided by $n \log n$ of different MoMQuickMergesort variants and their simulated worst cases.}\label{fig:qmsTime}
\end{figure}
\begin{figure}[htb]%
	\vspace{-8mm}
	\centering	\scalebox{0.82}{\hspace{-2mm}\input{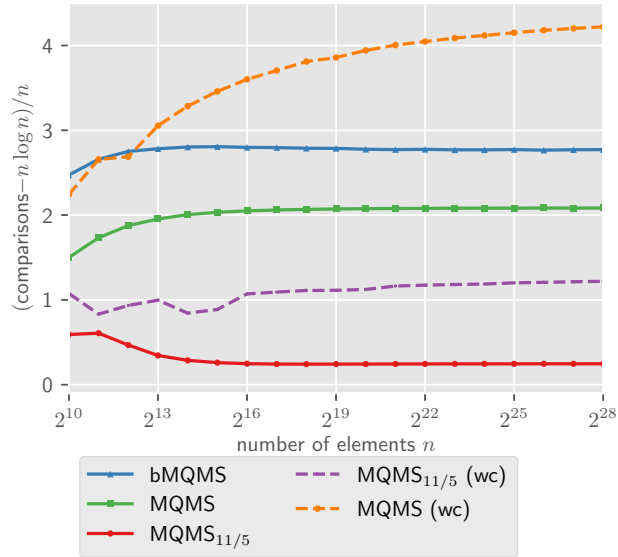}}
	\vspace{11mm}
	\caption{\small Number of comparisons (linear term) of different MoMQuickMergesort variants and their simulated worst cases. The worst case of \bMQMS\ is out of range.}\label{fig:qmsComp}
\end{figure}%

\smallskip
\paragraph{Comparison of different variants.}
In \prettyref{fig:qmsTime}, we compare the running times (divided by $n \log n$) of the different variants of MoMQuickMergesort including the simulated worst cases. We see that the version with undersampling is the fastest both in the average and worst case. Moreover, while in the average case the differences are rather small, in the worst case the improved versions are considerably better than the very basic variant. 

In \prettyref{fig:qmsComp} we count the number of comparisons of the different versions. The plot shows the coefficient of the linear term of the number of comparisons (i.\,e.\ the total number of comparisons minus $n \log n$ and then divided by $n$). \prettyref{tab:qmsComp} summarizes the results for $n=2^{28}$. 
We see that our theoretical estimates are close to the real values: for the average case, the difference is almost negligible; for the worst case, the gap is slightly larger because we use the average case of Mergesort as ``simulation'' for its worst case (notice that the difference between the average and our bound for the worst case is approximately $0.34n$). 
Moreover, the data suggest that for the worst case of $\bMQMS$ we would have to do experiment with even larger arrays in order to get a good estimate of the linear term of the number of comparisons.
Also we see that the actual number of comparisons approaches from below towards the theoretically estimated values~-- thus, the $\Oh(n^{0.8})$-terms in our estimates are  most likely negative. Notice however that, as remarked above, we do not know whether the bounds for the worst case are tight for real inputs.

\begin{table}[tb]%
	\begin{center}
		{\small\renewcommand{\arraystretch}{1.2}	\begin{tabular}{|l||p{2.2cm}|l|p{2.2cm}|l|}
			\hline
			\multirow{2}{*}{Algorithm\!\!\!}  & \multicolumn{2}{c|}{average case} & \multicolumn{2}{c|}{worst case}  \\  \cline{2-5}
			& exp.  &theo.& exp.  &theo.  \\
			\hline
			$\bMQMS$  		&$2.772$$\phantom{u}\pm 0.02$	& --  	&$13.05 $ $\phantom{u}\pm 0.17$ &13.8\\
			$\MQMS$ 		&$2.084$$\phantom{u}\pm 0.001 $&2.094	& $4.220$ $\phantom{u}\pm 0.007$ &4.57\\
			$\uMQMS{11/5}$ 	&$0.246$$\phantom{u}\pm 0.01 $&0.275	&$1.218$  $\phantom{u}\pm 0.011$ &1.59\\
			\hline
		\end{tabular}
	}
	\end{center}	
	\vspace{-3mm}
	\caption{\small Experimentally established linear term of the average and worst case (simulated) number of comparisons for MoMQuickMergesort for $n= 2^{28}$. The $\pm$ values are the standard deviations. The respective second columns show our theoretical estimates.}\label{tab:qmsComp}
\end{table}%

\begin{figure}[tb]%
	\vspace{-7mm}
	\begin{center}
		\scalebox{0.82}{\hspace{-2mm}\input{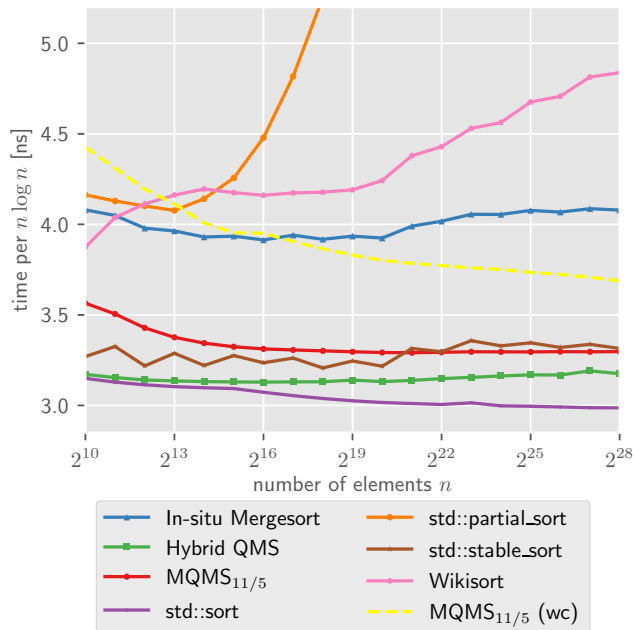}}
	\end{center}
	\vspace{12mm}	
	\caption{\small Running times of MoMQuickMergesort(average and simulated worst case), hybrid QMS and other algorithms for random permutations of 32-bit integers. Running times are divided by $n\log n$.}\label{fig:allAlgs}
\end{figure}%

\smallskip
\paragraph{Comparison with other algorithms.}
We conducted experiments comparing MoMQuickMergesort with the following other algorithms: Wikisort \cite{wikisort}, in-situ Mergesort \cite{ElmasryKS12}, \stdpartialsort (Heapsort), \stdstablesort (Mergesort) and \stdsort (Introsort), see \prettyref{fig:allAlgs}. For the latter three algorithms we use the libstdc++ implementations (from GCC version 4.8).
We also ran experiments with bottom-up Heapsort and Grailsort \cite{grailsort}, but omitted the results because these algorithms behave similar to Heapsort (resp.\ Wikisort). We see that MoMQuickMergesort with undersampling (\uMQMS{11/5}) performs better than all other algorithms except hybrid QuickMergesort and \stdsort. Moreover, for $n=2^{28}$ the gap between \uMQMS{11/5} and \stdsort is only roughly 10\%, and the simulated worst-case of \uMQMS{11/5} is again only slightly over 10\% worse than its average case. 

Notice that while all algorithms have a worst-case guarantee of $\Oh(n\log n)$, the good average case behavior of \stdsort comes with the cost of a bad worst case (see \prettyref{fig:hybrid})~-- the same applies to in-situ Mergesort. Also notice that even the simulated worst case of MoMQuickMergesort is better than the running times of in-situ Mergesort, Wikisort and Heapsort, i.\,e.\ all the other non-hybrid in-place algorithms we tested. 

In \prettyref{fig:Record} and \prettyref{fig:PointerRecord} we measure running times when sorting large objects: in \prettyref{fig:Record} we sort 44-byte records which are compared according to their first 4 bytes. \prettyref{fig:PointerRecord} shows the results when comparing pointers so such records which are allocated on the heap. In both cases \stdsort is the fastest, but MoMQuickMergesort with undersampling is still faster than \stdpartialsort, in-situ Mergesort and Wikisort (unfortunately the latter did not run for sorting pointers). 

In all the experiments, the standard deviations of most algorithms was negligible. Only \stdsort, hybrid QuickMergesort and the simulated worst cases showed a standard deviation which would be visible in the plots. For the worst cases the large standard deviation is because we use \stdnth for choosing bad pivots~-- thus, it is only an artifact of the worst case simulation. The standard deviations of \stdsort and hybrid QuickMergesort can be seen in \prettyref{fig:hybrid} below.
\begin{figure}[tb]%
	\vspace{-5mm}
	\begin{center}
		\scalebox{0.82}{\hspace{-2mm}\input{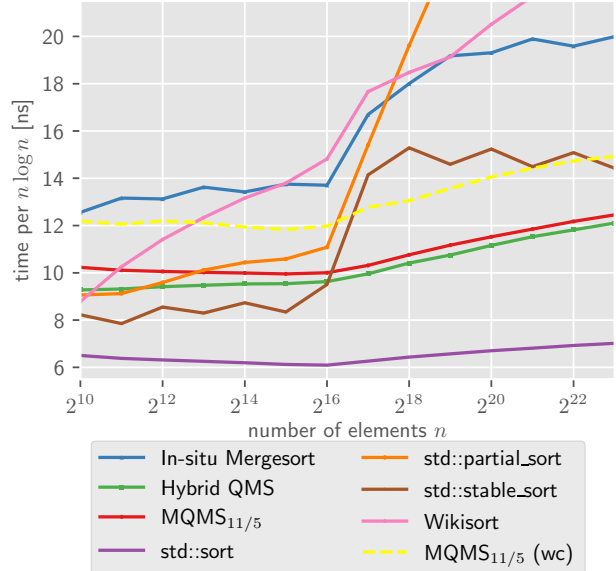}}
	\end{center}
	\vspace{12mm}	
	\caption{\small Running times of MoMQuickMergesort (average and simulated worst case), hybrid QMS and other algorithms for random permutations 44-byte records with 4-byte keys. Running times are divided by $n\log n$.}\label{fig:Record}
\end{figure}%

\begin{figure}[tb]%
	\vspace{-5mm}
	\begin{center}
		\scalebox{0.82}{\hspace{-2mm}\input{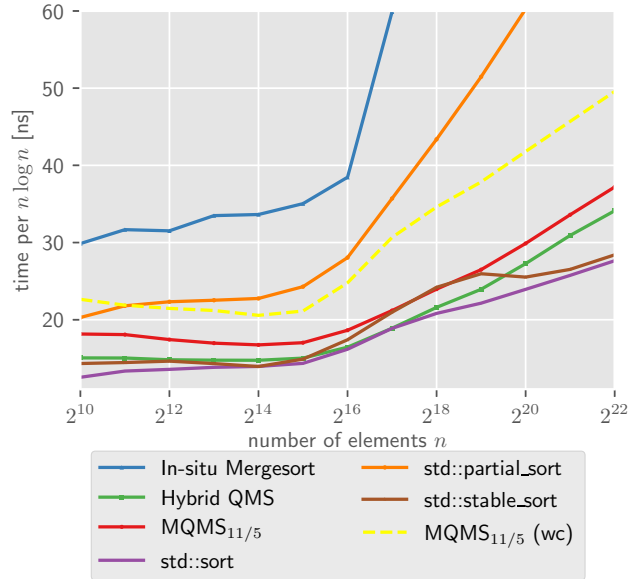}}
	\end{center}
	\vspace{12mm}	
	\caption{\small Running times of MoMQuickMergesort (average and simulated worst case), hybrid QMS and other algorithms for random permutations of pointers to records. Running times are divided by $n\log n$. Wikisort did not work with pointers.}\label{fig:PointerRecord}
\end{figure}%

\begin{figure*}[bht]
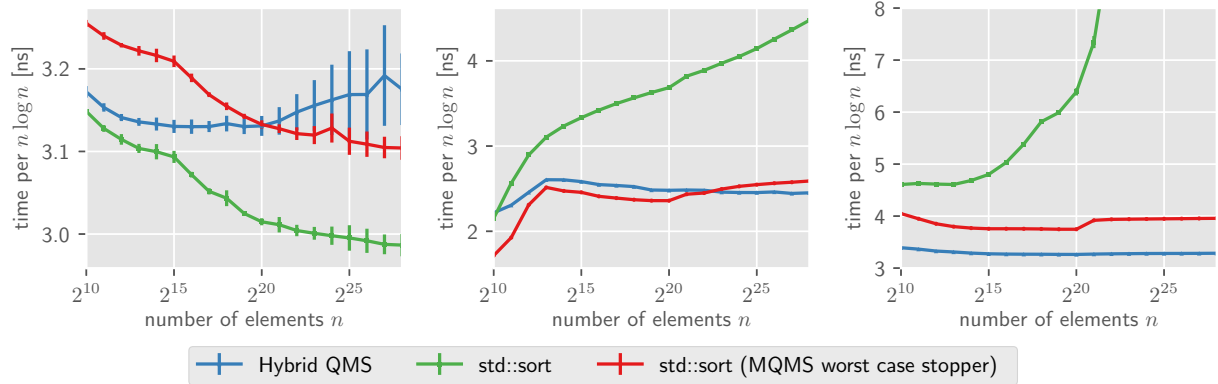
%
	\vspace{-5mm}
	\scalebox{0.82}{\hspace{4.5mm}\input{hybridAlg.random.pgf}\input{hybridAlg.merge.pgf}\input{hybridAlg.mo3killer.pgf}\hspace{-4.5mm}}
	\vspace{8mm}
	\caption{\small Running times of Introsort (\stdsort) with Heapsort (original) and MoMQuickMergesort as worst-case stopper and hybrid QuickMergesort. Left: random permutation, middle: merge, right: median-of-three killer sequence. The vertical bars represent the standard deviations.}\label{fig:hybrid}
\end{figure*}%

\smallskip
\paragraph{MQMS as worst-case stopper.}
In \prettyref{fig:hybrid} we compare the hybrid algorithms described in \prettyref{sec:hybrid}: hybrid QuickMergesort and Introsort (\stdsort) with MoMQuickMergesort as worst-case stopper. We also include the original \stdsort. We compare three different kinds of inputs: random permutations, merging two sorted runs of almost equal length (the first run two elements longer than the second run), and a median-of-three killer sequence, where the largest elements are located in the middle and back of the array and in between the array is randomly shuffled. 

While for random inputs the difference between the two variants of \stdsort is negligible and also hybrid QuickMergesort is only slightly slower (be aware of the scale of the plot), on the merge permutation we see a considerable difference for large $n$. Except for very small $n$, hybrid QuickMergesort is the fastest here. On the median-of-three killer input, the original \stdsort is outperformed by the other algorithms by a far margin. Also hybrid QuickMergesort is faster than both variants of \stdsort. 

\prettyref{fig:hybrid} also displays the standard deviation as error bars. We see that for the special permutations the standard deviations are negligible~-- which is no surprise. For random permutation hybrid QuickMergesort shows a significantly higher standard deviation than the \stdsort variants. This could be improved by using large pivot samples (\eg\ pseudomedian of 9 or 25). Notice that only for $n\geq 2^{25}$ the standard deviations are meaningful since for smaller $n$ each measurement is already an average so the calculated standard deviation is much smaller than the real standard deviation.

\section{Conclusion}

We have shown that by using the median-of-medians algorithm for pivot selection QuickMergesort turns into a highly efficient algorithm in the worst case, while remaining  competitive on average and fast in practice. 
Future research might address the following points:
\begin{itemize}
	\item 
Although pseudomedians of fifteen elements for sampling the pivot sample seems to be a good choice, other methods could be investigated (\eg\ median of nine).
\item 
The running time could be further improved by using Insertionsort to sort small subarrays.
\item 
Since the main work is done by Mergesort, any tuning to the merging procedure also would directly affect MoMQuickMergesort.
\item 
Also other methods for in-place Mergesort implementations are promising and should be developed further~-- in particular, the (unstable) merging procedure by Chen \cite{Chen06} seems to be a good starting point.
\item To get out the most performance of modern multi-core processors, a parallel version of the algorithm is desirable. For both Quicksort and Mergesort efficient parallel implementations are known. Thus, an efficient parallel implementation of MoMQuickMergesort is not out of reach. However, there is one additional difficulty to overcome: while in the Quicksort recursion both parts can be sorted independently in parallel, in QuickMergesort this is not possible since one part is necessary as temporary memory for sorting the other part with Mergesort.

\end{itemize}

\end{document}